\documentclass[letterpaper, 10 pt, conference]{ieeeconf}  
\IEEEoverridecommandlockouts 
\usepackage{amsmath,amssymb,color,cite}
\usepackage{graphicx}

\usepackage{latexsym}
\usepackage{algorithm}
\usepackage{algpseudocode}
\usepackage{verbatim}
\usepackage{cite}
\usepackage[english]{babel}
\usepackage{balance}

\usepackage{amsmath,amsthm}

\let\labelindent\relax
\usepackage{enumitem}
\usepackage{xcolor}

\usepackage{pgf}
\usepackage{pgfplots,tikz}
\usetikzlibrary{tikzmark}
\usepackage[font=footnotesize]{caption}
\usepackage{arydshln}

\usepackage{amsfonts}
\usepackage{amssymb}
\usepackage{amsbsy}
\usepackage{bm}
\usepackage{mathtools}
\usepackage[
  colorlinks=true,
  linkcolor=black,
  urlcolor=blue,
  citecolor=black
]{hyperref}
\usepackage[font=footnotesize]{subcaption}
\usepackage[font=footnotesize]{caption}
\usepackage{yfonts}

\theoremstyle{definition}
\newtheorem{definition}{Definition}

\theoremstyle{plain}
\newtheorem{assumption}{Assumption}
\newtheorem{lemma}{Lemma}
\newtheorem{theorem}{Theorem}

\newcommand{\EE}{\mathbb{E}}
\newcommand{\FF}{\mathcal{F}}
\newcommand{\xx}{\mathbf{x}}
\newcommand{\XX}{\mathcal{X}}
\newcommand{\NN}{\mathcal{N}}
\newcommand{\RR}{\mathbb{R}}
\newcommand{\GG}{\mathcal{G}}
\newcommand{\vv}{\mathbf{v}}
\newcommand{\xxstar}{\mathbf{x^*}}
\newcommand{\xstar}{x^*}
\newcommand{\Xstar}{\mathcal{X}^*}
\newcommand{\MM}{\mathbf{M}}
\newcommand{\bfrak}{\mathfrak{b}}

\newcommand{\zz}{\mathbf{z}}
\newcommand{\UU}{\mathbf{U}}
\newcommand{\ee}{\mathbf{e}}

\allowdisplaybreaks

\title{\LARGE \bf
Last-Iterate Guarantees for Learning in Co-coercive Games
}

\author{Siddharth Chandak$^{1\dagger}$, Ramanan Tamizholi$^{2\dagger}$ and Nicholas Bambos$^{3}$
\thanks{$\dagger$ Equal contribution. Listed alphabetically.}
\thanks{$^{1,3}$ Department of Electrical Engineering, Stanford University, USA}
\thanks{$^{2}$ {Indian Institute of Science, Bengaluru, India}}
\thanks{$^{1}${\tt\small chandaks@stanford.edu}}
\thanks{$^{2}${\tt\small ramanant@iisc.ac.in}}
\thanks{$^{3}${\tt\small bambos@stanford.edu}}  
}

\begin{document}

\maketitle
\thispagestyle{empty}
\pagestyle{empty}

\begin{abstract}
We establish finite-time last-iterate guarantees for vanilla stochastic gradient descent in co-coercive games under noisy feedback. This is a broad class of games that is more general than strongly monotone games, allows for multiple Nash equilibria, and includes examples such as quadratic games with negative semidefinite interaction matrices and potential games with smooth concave potentials. Prior work in this setting has relied on relative noise models, where the noise vanishes as iterates approach equilibrium, an assumption that is often unrealistic in practice. We work instead under a substantially more general noise model in which the second moment of the noise is allowed to scale affinely with the squared norm of the iterates, an assumption natural in learning with unbounded action spaces. Under this model, we prove a last-iterate bound of order $O(\log(t)/t^{1/3})$, the first such bound for co-coercive games under non-vanishing noise. We additionally establish almost sure convergence of the iterates to the set of Nash equilibria and derive time-average convergence guarantees.

\end{abstract}
\section{Introduction}
Decision-making in multi-agent systems arises in many applications, including communication networks, distributed control, and economic markets. In these settings, each agent's outcome depends on both its own action and the actions of others, making strategic interaction a central feature. Game theory provides a natural mathematical framework for analyzing these interactions, where each agent seeks to maximize a utility function that captures its individual objective \cite{alpcan2009control, marden2013distributed}. In particular, \emph{continuous games}, where agents select actions from continuous spaces, are especially relevant for modeling real-valued decision-making problems such as resource allocation, pricing, and distributed learning.

An important question in such games is whether agents can learn a Nash equilibrium (NE) through repeated decentralized interaction. In continuous games, gradient-based methods provide a natural approach to decentralized learning due to their simplicity and scalability, as each agent updates its action using only local gradient information of its utility function \cite{tatarenko2020geometric, mazumdar2020gradient}. In many practical settings, however, this gradient feedback is noisy due to factors such as observational errors, random disturbances, or stochasticity in the underlying utility functions \cite{Zhou_2019, Chandak_learning}. The most natural learning rule in this setting is vanilla stochastic gradient descent (SGD), where at each step every agent updates its action by taking a step in the direction of the noisy gradient estimate of its utility function, without any additional modifications such as momentum, variance reduction, or extrapolation.

To establish convergence guarantees for learning dynamics, it is necessary to impose structural assumptions on the underlying game. Most existing convergence and finite-time analyses for stochastic gradient descent have focused on \emph{strongly monotone games} \cite{rosen1965existence}. Strong monotonicity of the game gradient operator enables an $\mathcal{O}(1/k)$ mean-square last-iterate convergence rate for vanilla SGD \cite{zhou2021robust}, and moreover guarantees uniqueness of the Nash equilibrium. However, the strong structural assumptions required and the uniqueness of equilibrium considerably restrict its scope of applicability.

In this work, we consider \emph{co-coercive games}, a broader class that has attracted recent interest in the literature \cite{Zhou_2020,loizou2021stochastic}. Co-coercivity is a substantially weaker condition on the game gradient operator than strong monotonicity, making convergence analysis for stochastic gradient methods considerably harder. Moreover, co-coercive games may admit non-singleton Nash equilibrium sets, introducing an additional challenge absent in the strongly monotone setting. This class includes several important examples, such as quadratic games with negative semidefinite interaction matrices and potential games with smooth concave potentials, which we present as motivating examples in the paper.

In this work, we establish the first finite-time last-iterate convergence guarantee for co-coercive games under stochastic gradient dynamics with non-vanishing noise. Specifically, for the vanilla SGD algorithm, we prove a last-iterate mean-square bound of order $\mathcal{O}\left(\log(t)/t^{1/3}\right).$ In contrast to prior works on co-coercive games, which rely on relative noise models where the noise vanishes as iterates approach equilibrium \cite{Zhou_2020}, our analysis is carried out under a general noise model in which the second moment of the noise scales affinely with the squared norm of the current iterate. This assumption is natural in learning with unbounded action spaces, and strictly subsumes the relative noise model. We additionally establish almost sure convergence of the iterates to the set of Nash equilibria and derive time-average convergence guarantees. Our analysis draws on techniques from the study of inexact Krasnosel'skii-Mann iterations \cite{Davis-splitting, Liang-inexact}, extended with additional arguments to handle last-iterate convergence under the general noise model.

\subsection{Related Work}
Learning in games has been studied extensively across economics, control theory, and machine learning. A central question in this literature is whether players converge to a Nash equilibrium when they update their strategies using only local information. Most convergence results for gradient-based methods rely on \emph{strong monotonicity}, which guarantees the existence of a unique equilibrium and has led to a well-developed theory with finite-time guarantees in both deterministic and stochastic settings (see, e.g., \cite{bravo2018bandit, Zhou_2019, tatarenko2022rate}). A related line of work focuses on generalized Nash equilibrium (GNE) problems, in which agents are coupled not only through their objective functions but also through shared constraints \cite{Chandak_learning, tatarenkoGNE}. Establishing convergence in this setting is generally more challenging, owing to the additional complexity introduced by the constraint structure.

Beyond strong monotonicity, works have also considered \emph{monotone games}, where the game gradient operator is merely monotone. This setting is more challenging for vanilla SGD, as the contraction properties exploited under strong monotonicity are no longer available. Indeed, it is known that vanilla SGD can fail to converge in this setting, with iterates exhibiting cycling or divergence even in simple examples \cite{mertikopoulos2018cycles}. As a result, guarantees in this setting have largely been obtained for modified algorithms such as optimistic gradient descent and extragradient methods (see, e.g., \cite{gorbunov2022last, tatarenko2019monotone, cai2022tight}).

Between strongly monotone and monotone games lies the class of \emph{co-coercive games}, first introduced in \cite{Zhou_2020}. In this setting, the game gradient operator is co-coercive, providing more structure than monotonicity while being less restrictive than strong monotonicity. The same work also established last-iterate convergence guarantees for vanilla SGD in this setting, although their analysis relies on a relative noise model where the noise vanishes as the iterates approach equilibrium. In contrast, we consider a more relaxed noise assumption (see Section~\ref{sec:SGD_noise}). A related notion, termed \emph{expected co-coercivity}, was introduced in \cite{loizou2021stochastic}. This property is implied by quasi-strong monotonicity, a condition defined relative to a fixed point $\xxstar$. Similar to strong monotonicity, quasi-strong monotonicity guarantees a unique Nash equilibrium and enables linear convergence rates.

\subsection{Outline and Notation}
The rest of the paper is organized as follows. Section~\ref{sec:coco_games} introduces co-coercive games and presents two motivating examples. Section~\ref{sec:SGD_noise} describes the vanilla SGD update rule and the assumptions on the noisy feedback received by players. Section~\ref{sec:results} presents the main results and Section~\ref{sec:outline} outlines the proof strategy. Finally, Section~\ref{sec:conc} concludes with directions for future work.

We use bold letters to denote vectors and the standard game-theoretic notation of $\xx_{-n}$ to denote the vector of actions for all players except player $n$. $\|\cdot\|$ denotes the Euclidean norm throughout the paper and $\langle\xx,\zz\rangle=\xx^\top\zz$ denotes the inner product between vectors $\xx$ and $\zz$. 

\section{Co-coercive Games}\label{sec:coco_games}
Consider a set of players $\NN=\{1,2,\ldots, N\}$, where each player $n$ chooses an action $x_n=(x_n^{(1)}, \ldots, x_n^{(d)})\in\RR^d$. Let $\xx=(x_1,\ldots,x_N)\in\XX=\RR^{Nd}$ denote the concatenation of all action vectors. A continuous game is defined as follows.
\begin{definition}
    A continuous game is a tuple $\GG=(\NN, \XX, \{u_n\}_{n=1}^N)$ where $\NN$ is the set of $N$ players, $\XX=\RR^{Nd}$ is the set of action profiles and $u_n:\XX\mapsto\RR$ is the $n$-th player's utility function.
\end{definition}
The properties of the game are characterized through the gradient operator induced by the utility functions. We assume that the function $u_n(\xx)$ is continuously differentiable in $\xx$ for each $n\in\NN$. Our key assumption is that the game is co-coercive. For this, we define the individual gradient as $v_n(\xx)=\nabla_{x_n} u_n(\xx)$, 
and the game gradient operator as
$$\vv(\xx)=(v_1(\xx),\ldots, v_N(\xx)).$$
\begin{definition}\label{defn:co-coercive}
    For $\lambda>0$, a continuous game with continuously differentiable utility functions is called $\lambda$-co-coercive if the following holds for all $\xx,\xx'\in\XX$
    $$\langle \xx'-\xx, \vv(\xx')-\vv(\xx)\rangle \leq -\lambda \|\vv(\xx')-\vv(\xx)\|^2.$$
\end{definition}
The sign convention in the definition above is natural for utility maximization. Equivalently, the operator $-\vv$ is $\lambda$-co-coercive. Before making further remarks on co-coercive games, we present the definition of a Nash equilibrium (NE).
\begin{definition}
    An action profile $\xxstar$ is a Nash equilibrium for game $\GG$ if for each player $n\in\NN$, $u_n(\xstar_n,\xxstar_{-n})\geq u_n(x_n,\xxstar_{-n})$ for all $x_n\in\RR^d$.
\end{definition}
Unlike strongly monotone games, where there is a unique NE, there can exist multiple or no NE in a co-coercive game. For co-coercive games with unconstrained action spaces, any Nash equilibrium $\xxstar$ satisfies the first-order condition $\vv(\xxstar)=0$ \cite{Zhou_2020}. We denote the set of Nash equilibria by
$$\Xstar=\left\{\xx\in\RR^{Nd}\mid \vv(\xx)=0\right\}.$$ As our objective in this paper is to study learning on NE in co-coercive games, we make the following assumption that the set of NE is non-empty.
\begin{assumption}\label{assu:non-empty}
    The set of NE, $\Xstar$, is non-empty.
\end{assumption}

\subsection{Examples of Co-coercive Games}
In this subsection, we present two examples of co-coercive games which are not strongly monotone. These help us better understand co-coercive games and build some intuition.

\subsubsection{Quadratic Games with NSD Interaction Matrix}

An interesting class of co-coercive games is given by quadratic games. For simplicity, we consider the scalar-action case, where each player chooses an action $x_n\in\RR$. The same construction extends naturally to the case of multi-dimensional actions by replacing scalar coefficients with block matrices.

Let $Q=(Q_{nm})\in\RR^{N\times N}$ be a symmetric negative semidefinite matrix, and define the utility of player $n$ as
$$
u_n(\xx)= \frac{1}{2}Q_{nn}x_n^2+\sum_{m\neq n}Q_{nm}x_nx_m.
$$
Then, the game gradient operator is linear and given by
$$
\vv(\xx)=Q\xx.
$$
Then the game is $\bigl(1/|\lambda_{\min}(Q)|\bigr)$-co-coercive, where $\lambda_{\min}(Q)$ denotes the smallest eigenvalue of $Q$ \cite[Pg.\ 79]{VI}. In contrast, the game would have been strongly monotone if $Q$ was a negative definite matrix.

An illustrative example is given by the following two-player non-separable game:
$$
u_1(x_1,x_2)= -\frac{1}{2}x_1^2-x_1x_2,
\quad
u_2(x_1,x_2)= -\frac{1}{2}x_2^2-x_1x_2.
$$
Here, the corresponding interaction matrix
$$
Q=
\begin{pmatrix}
-1 & -1\\
-1 & -1
\end{pmatrix}
$$
is negative semidefinite with eigenvalues $-2$ and $0$. Hence, the game is co-coercive but not strongly monotone. The set of NE for this game is given by
$$
\Xstar=
\{(x_1,x_2)\in\RR^2:x_1+x_2=0\},
$$
which illustrates that co-coercive games may admit infinitely many NE.

\subsubsection{Potential Games with Concave and Smooth Potential}
Another class of co-coercive games is given by potential games with a smooth concave potential function. A continuous game is called a \emph{potential game} if there exists a continuously differentiable function $\Phi:\XX\to\RR$ such that
$$
v_n(\xx)=\nabla_{x_n}\Phi(\xx),\qquad \forall n\in\NN.
$$
Equivalently, the game gradient operator satisfies
$$
\vv(\xx)=\nabla \Phi(\xx).
$$
Suppose that $\Phi$ is concave and has an $L$-Lipschitz gradient. Then, the game is $(1/L)$-co-coercive \cite[Theorem 2.1.5]{Nesterov}. In contrast, the potential game would have been strongly monotone if $\Phi$ was a strongly concave function.

An illustrative example is the following \emph{concave aggregate game}, where each player benefits from increasing their own action but incurs a cost that depends on the total aggregate action across all players. Let each player $n\in\NN$ choose an action $x_n\in\RR^d$, and consider utilities
\begin{equation}\label{eq:aggregate-utility}
    u_n(\xx)
    =
    \langle \phi,x_n\rangle
    -\frac{\gamma}{2}\left\|\sum_{m=1}^N x_m\right\|^2,
\end{equation}
where $\phi\in\RR^d$ is common across players and $\gamma>0$. The corresponding potential for the game is
\begin{equation}\label{eq:aggregate-potential}
    \Phi(\xx)
    =
    \sum_{n=1}^N \langle \phi,x_n\rangle
    -\frac{\gamma}{2}\left\|\sum_{n=1}^N x_n\right\|^2,
\end{equation}
which is concave and smooth with $L=\gamma N$. The set of NE for this game is given by
\[
\Xstar=
\left\{
\xx\in\RR^{Nd}
\;\middle|\;
\sum_{n=1}^N x_n=\frac{\phi}{\gamma}
\right\},
\]
which is an affine subspace of $\RR^{Nd}$.

\section{Stochastic Gradient Dynamics and Noise Model}\label{sec:SGD_noise}
We now introduce the stochastic gradient learning dynamics and the associated noise model considered throughout the paper. At time $t\in\{0,1,\ldots\}$, player $n$ plays action $x_{n,t}$ and obtains noisy feedback by receiving a noisy version $\hat{v}_{n,t}$ of the individual gradient evaluated at the current action profile. We denote by $\xx_t=(x_{1,t},\ldots, x_{N,t})$ and $\hat{\vv}_t=(\hat{v}_{1,t},\ldots, \hat{v}_{N,t})$, the joint action profile and the concatenated vector of gradient observations, respectively, at time $t$. 

Our players run stochastic gradient descent (SGD) using these noisy gradients to update their actions. Since players maximize utilities, this corresponds to a gradient ascent update. Formally, starting with some arbitrary profile $\xx_0$, each player $n$ performs the following update rule
\begin{equation}\label{SGD-iter}
    x_{n,t+1}=x_{n,t}+\beta_t\hat{v}_{n,t},
\end{equation}
where $\beta_t$ is the stepsize sequence which we discuss in detail in the next section. The noise in the gradient observation $\hat{v}_{n,t}$ can arise due to many reasons: for instance, estimates may be susceptible to random measurement errors, and the game's utility functions may be random themselves \cite{Zhou_2019}. We model this noise as a martingale difference sequence.
\begin{assumption}\label{assu:martingale}
    The noisy gradient $\hat{v}_{n,t}$ is of the form $\hat{v}_{n,t}=v_n(\xx_t)+M_{n,t+1}$. Let $\MM_t=(M_{1,t}, \ldots, M_{N,t})$. Then $\MM_{t+1}$ is a martingale difference sequence with respect to the filtration $\FF_t=\sigma(\xx_0, \MM_k, k\leq t)$, i.e., $\EE[\MM_{t+1}|\FF_t]=0$ for all $t\geq 0$. Moreover,
    $$\EE[\|\MM_{t+1}\|^2\mid\FF_t]\leq \sigma^2(1+\|\xx_t\|^2).$$
\end{assumption}
Our assumption on the scaling of the martingale noise is substantially more general than the assumptions used in prior work on co-coercive games. In particular, most of the results in \cite{Zhou_2020} are established under a \textit{relative noise model}, where $\EE[\|\MM_{t+1}\|^2\mid\FF_t]\leq \tau_t\|\vv(\xx_t)\|^2.$ This assumption significantly simplifies the analysis, since the noise magnitude vanishes as the iterates approach equilibrium, i.e., as $\|\vv(\xx_t)\|\to 0$. Additionally, \cite{Zhou_2020} assumes that $(\tau_t)$ is a decreasing sequence, which further facilitates convergence-rate analysis. They also study an \textit{absolute noise model} of the form $\EE[\|\MM_{t+1}\|^2\mid\FF_t]\leq \sigma_t^2.$ Under this model, only time-average guarantees are obtained in the constant-variance case $\sigma_t=\sigma$. In contrast, we consider the more general non-vanishing noise model, where the second moment of the noise scales affinely with the squared norm of the iterate. Allowing the second moment of the noise to grow with the iterates is important for modeling stochastic gradient estimators on unbounded action spaces, where uniformly bounded variance assumptions are often unrealistic.

\section{Main Results}\label{sec:results}

In this section, we study the convergence of the iterates $\xx_t$ generated by \eqref{SGD-iter} to the set of Nash equilibria $\Xstar$. Our focus is on last-iterate convergence rates. Since $\Xstar=\{\xx:\vv(\xx)\!=\!0\}$, we measure convergence through $\|\vv(\xx_t)\|$, which captures proximity to equilibrium. Our main objective is to obtain a last-iterate mean-square bound, i.e., a bound on $\EE[\|\vv(\xx_t)\|^2]$. As a consequence of our analysis, we also establish almost sure convergence of $\xx_t$ to $\Xstar$ and derive time-average guarantees. 

We first specify the class of admissible stepsize sequences. 
\begin{assumption}\label{assu:stepsize}
    The stepsize sequence is of the form $$\beta_t=\frac{1}{(t+T_0)^\bfrak},$$
    for some $T_0\geq \max\{1,\lambda^{-1/\bfrak},  (2\bfrak)^{1/(1-\bfrak)}\}$ and $0.5\!<\!\bfrak\!<\!1$.
\end{assumption}
This choice ensures that the stepsizes are non-summable ($\sum_t \beta_t\!=\!\infty)$, square-summable ($\sum_t \beta_t^2\!<\!\infty)$, and non-increasing. The assumption that $T_0$ is large enough is taken just to ensure that $\beta_t$ is smaller than $1$ and $\lambda$ and that $\beta_t\!-\!\beta_{t+1}\!\leq\! 0.5\beta_t^2$ for all $t\!\geq\!\! 0$. This assumption is not necessary; in its absence, our result holds for all $t$ greater than some $t_0$.

Our first result establishes almost sure convergence of the iterates generated by \eqref{SGD-iter} to the set of Nash equilibria.
\begin{theorem}\label{thm:as}
Suppose the players update their actions according to \eqref{SGD-iter} in a $\lambda$-co-coercive game, and that Assumptions \ref{assu:non-empty}-\ref{assu:stepsize} hold. Then the action profile $\xx_t$ converges to the set of Nash equilibria $\Xstar$ almost surely.
\end{theorem}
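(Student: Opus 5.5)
We prove almost sure convergence with a Robbins--Siegmund supermartingale argument on the squared distance to a fixed equilibrium. Then we upgrade subsequential convergence to convergence of the whole sequence using the standard Opial-type trick for Fejér-monotone stochastic sequences.

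\textbf{Step 1: one-step inequality.} Fix any $\xxstar\in\Xstar$, which exists by Assumption~\ref{assu:non-empty}. Expanding $\|\xx_{t+1}-\xxstar\|^2$ using \eqref{SGD-iter} and Assumption~\ref{assu:martingale} gives
\begin{align*}
\EE[\|\xx_{t+1}-\xxstar\|^2\mid\FF_t] &= \|\xx_t-\xxstar\|^2 + 2\beta_t\langle \xx_t-\xxstar,\vv(\xx_t)\rangle + \beta_t^2\|\vv(\xx_t)\|^2 + \beta_t^2\EE[\|\MM_{t+1}\|^2\mid\FF_t].
\end{align*}
Since $\vv(\xxstar)=0$, Definition~\ref{defn:co-coercive} gives $\langle \xx_t-\xxstar,\vv(\xx_t)\rangle\le -\lambda\|\vv(\xx_t)\|^2$. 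By Assumption~\ref{assu:stepsize} we have $\beta_t\le\lambda$, so the drift terms combine to at most $-\lambda\beta_t\|\vv(\xx_t)\|^2$.

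For the noise term we use $\|\xx_t\|^2\le 2\|\xx_t-\xxstar\|^2+2\|\xxstar\|^2$. Writing $D_t=\|\xx_t-\xxstar\|^2$, this yields
$$\EE[D_{t+1}\mid\FF_t]\le (1+2\sigma^2\beta_t^2)D_t - \lambda\beta_t\|\vv(\xx_t)\|^2 + \sigma^2(1+2\|\xxstar\|^2)\beta_t^2.$$

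\textbf{Step 2: Robbins--Siegmund.} Because $\sum_t\beta_t^2<\infty$, the Robbins--Siegmund theorem applies to this inequality. It gives, almost surely, that $D_t$ converges to a finite limit and that $\sum_t\beta_t\|\vv(\xx_t)\|^2<\infty$. Together with $\sum_t\beta_t=\infty$, the second fact gives $\liminf_t\|\vv(\xx_t)\|=0$ a.s. The iterates are also a.s. bounded, since $D_t$ converges.

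\textbf{Step 3: from liminf to convergence to $\Xstar$.} This is the main obstacle. The null set in Step~2 depends on $\xxstar$, and we need a single null set that works for all equilibria.

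We handle this as follows. $\Xstar$ is closed because $\vv$ is continuous; it is also convex, since it is the zero set of the maximal monotone operator $-\vv$. Take a countable dense subset $\{\mathbf{q}_k\}$ of $\Xstar$ and discard the countable union of the null sets from Step~2. On the remaining event, $\|\xx_t-\mathbf{q}_k\|$ converges for every $k$. Because
$$\bigl|\|\xx_t-\mathbf{p}\|-\|\xx_t-\mathbf{q}_k\|\bigr|\le\|\mathbf{p}-\mathbf{q}_k\|,$$
a density argument shows that $\|\xx_t-\mathbf{p}\|$ converges for every $\mathbf{p}\in\Xstar$ on this same event.

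Next, $\liminf\|\vv(\xx_t)\|=0$ and boundedness give a subsequence $\xx_{t_j}\to\bar\xx$. Continuity of $\vv$ gives $\vv(\bar\xx)=0$, so $\bar\xx\in\Xstar$. Then $\|\xx_t-\bar\xx\|$ converges, and along $t_j$ its limit is $0$, so $\xx_t\to\bar\xx\in\Xstar$ a.s.

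If the $\liminf$ step needs care, there is an alternative route to full convergence of $\|\vv(\xx_t)\|$. Since $-\vv$ is co-coercive, it is $1/\lambda$-Lipschitz. The increments satisfy $\|\xx_{t+1}-\xx_t\|\le\beta_t(\|\vv(\xx_t)\|+\|\MM_{t+1}\|)$, and the martingale $\sum_t\beta_t\MM_{t+1}$ converges a.s. because $\sum_t\beta_t^2(1+\|\xx_t\|^2)<\infty$ on the event where the iterates are bounded. These facts let us run the standard upcrossing argument to show $\|\vv(\xx_t)\|\to0$. This is not needed, however: the subsequence argument above already suffices.
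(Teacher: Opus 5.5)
Your proof is correct, and it takes a genuinely different route from the paper. The paper uses Robbins--Siegmund on the inequality of Lemma~\ref{lemma:intermediate} only to get almost sure boundedness of $\xx_t$. It then applies LaSalle's invariance principle to the ODE $\dot{\xx}=\vv(\xx)$ with $V(\xx)=\|\xx-\xxstar\|^2$, and transfers the conclusion to the iterates via the ODE method of stochastic approximation (Borkar's Theorem~2.1). You instead use both outputs of Robbins--Siegmund, namely a.s.\ convergence of $\|\xx_t-\xxstar\|$ and $\sum_t\beta_t\|\vv(\xx_t)\|^2<\infty$. You then finish with an Opial/quasi-Fej\'er argument; the countable dense subset of $\Xstar$ gives one null set that works for every equilibrium, including the sample-path-dependent cluster point $\bar{\xx}$. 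Your route is discrete-time and self-contained, needs only continuity of $\vv$ at the end, and proves something stronger: $\xx_t$ converges almost surely to a single (random) point of $\Xstar$, not merely to the set. That matters because $\Xstar$ can be a continuum, as in both of the paper's examples. It also sidesteps a step the paper leaves implicit. Borkar's theorem only confines the iterates to an internally chain transitive invariant set, and with non-isolated equilibria the fact that such sets lie inside $\Xstar$ does not follow from convergence of all ODE trajectories alone. It needs an extra observation, e.g.\ that $\mathrm{dist}(\cdot,\Xstar)$ is nonincreasing along the flow. The paper's approach, in return, is modular: it plugs into the general ODE framework, which extends to asynchronous updates, Markov noise, or multiple timescales, where a Fej\'er-type argument is harder to run. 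Two minor remarks. Convexity of $\Xstar$ is not needed, since every subset of $\RR^{Nd}$ is separable and $\vv(\bar{\xx})=0$ follows from continuity. Your closing alternative route is redundant, since once $\xx_t\to\bar{\xx}$, continuity already gives $\vv(\xx_t)\to 0$.
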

The proof is provided in Appendix \ref{app:proof:as}. Our next result provides a bound on the time-averaged residual.
\begin{theorem}\label{thm:time-avg}
    Suppose the players update their actions according to \eqref{SGD-iter} in a $\lambda$-co-coercive game, and that Assumptions \ref{assu:non-empty}-\ref{assu:stepsize} hold. Then there exists $C_1>0$ such that, for all $t\geq 0$,
    $$\frac{1}{t+1}\sum_{i=0}^t \EE\left[\|\vv(\xx_i)\|^2\right]\leq \frac{C_1}{(t+1)^{1-\bfrak}}.$$
\end{theorem}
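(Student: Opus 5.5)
Fix an arbitrary $\xxstar\in\Xstar$ (Assumption~\ref{assu:non-empty}) and study the Lyapunov quantity $D_t=\|\xx_t-\xxstar\|^2$. The plan is to obtain a one-step inequality in which $\beta_t\|\vv(\xx_t)\|^2$ appears with a negative sign, take expectations, and telescope.

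\textbf{One-step inequality.} Expanding \eqref{SGD-iter} with $\hat\vv_t=\vv(\xx_t)+\MM_{t+1}$ gives
\[
D_{t+1}=D_t+2\beta_t\langle \xx_t-\xxstar,\vv(\xx_t)+\MM_{t+1}\rangle+\beta_t^2\|\vv(\xx_t)+\MM_{t+1}\|^2 .
\]
Definition~\ref{defn:co-coercive} with $\xx'=\xx_t$, $\xx=\xxstar$ and $\vv(\xxstar)=0$ yields $\langle \xx_t-\xxstar,\vv(\xx_t)\rangle\le-\lambda\|\vv(\xx_t)\|^2$. We condition on $\FF_t$ and use Assumption~\ref{assu:martingale}: the cross term with $\MM_{t+1}$ vanishes, and the second moment is at most $\|\vv(\xx_t)\|^2+\sigma^2(1+\|\xx_t\|^2)$. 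Next we bound $\|\xx_t\|^2\le 2D_t+2\|\xxstar\|^2$. Since $\beta_t\le\lambda$ by Assumption~\ref{assu:stepsize}, we have $\beta_t^2\le\lambda\beta_t$, and therefore
\[
\EE[D_{t+1}\mid\FF_t]\le (1+2\sigma^2\beta_t^2)D_t-\lambda\beta_t\|\vv(\xx_t)\|^2+c\beta_t^2,
\]
where $c=\sigma^2(1+2\|\xxstar\|^2)$.

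\textbf{Bounded second moments.} Taking expectations and iterating, and using $\prod(1+2\sigma^2\beta_k^2)\le\exp(2\sigma^2\sum\beta_k^2)<\infty$ together with $\sum\beta_k^2<\infty$ (since $\bfrak>1/2$), we get $\sup_t\EE[D_t]\le B<\infty$. This is the step where the affine noise model is harder than relative noise, since the noise must be absorbed into the multiplicative factor. Square summability handles this cleanly, so I do not expect a real obstacle here.

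\textbf{Summation.} Substituting $\EE[D_t]\le B$ back, we get
\[
\lambda\beta_i\EE\|\vv(\xx_i)\|^2\le \EE D_i-\EE D_{i+1}+(2\sigma^2B+c)\beta_i^2 .
\]
Summing over $i=0,\dots,t$ gives
\[
\sum_{i=0}^t\beta_i\EE\|\vv(\xx_i)\|^2\le K:=\|\xx_0-\xxstar\|^2+(2\sigma^2B+c)\sum_k\beta_k^2 .
\]
Since $\beta_t$ is non-increasing, $\beta_i\ge\beta_t=(t+T_0)^{-\bfrak}$ for $i\le t$. Hence
\[
\sum_{i=0}^t\EE\|\vv(\xx_i)\|^2\le K(t+T_0)^{\bfrak}\le K T_0^{\bfrak}(t+1)^{\bfrak}.
\]
Dividing by $t+1$ gives the claim with $C_1=KT_0^{\bfrak}/\lambda$.

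The only delicate point is the uniform bound on $\EE[D_t]$ under noise that grows with the iterates. As argued above, it is handled by the multiplicative Gronwall-type product using square summability. No rate information from the last-iterate analysis is needed.
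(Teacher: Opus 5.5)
Your proposal is correct and follows the paper's route: your one-step inequality is Lemma~\ref{lemma:intermediate}, the uniform moment bound via $\prod_k(1+2\sigma^2\beta_k^2)\le \exp(2\sigma^2\sum_k\beta_k^2)$ is Lemma~\ref{lemma:bounded_iter}, the telescoped weighted sum is Lemma~\ref{lemma:sum_of_errors}, and the last step uses monotonicity of $\beta_t$ and $t+T_0\le T_0(t+1)$ exactly as the paper does. The only slip is bookkeeping: summing gives $\sum_{i=0}^t\beta_i\EE\|\vv(\xx_i)\|^2\le K/\lambda$ rather than $K$, which your final constant $C_1=KT_0^{\bfrak}/\lambda$ already accounts for.
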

Explicit value for $C_1$ has been provided along with the theorem's proof in Appendix \ref{app:proof:time-avg}. For the class of stepsize sequences in Assumption \ref{assu:stepsize}, the fastest time-averaged rate is achieved when $\bfrak=0.5+\delta$, where $\delta>0$ can be chosen arbitrarily small. This yields a rate of $\mathcal{O}(1/t^{(0.5-\delta)})$.

We now present our main result, which provides a last-iterate mean-square bound on the residual.
\begin{theorem}\label{thm:last-iterate}
    Suppose the players update their actions according to \eqref{SGD-iter} in a $\lambda$-co-coercive game, and that Assumptions \ref{assu:non-empty}-\ref{assu:stepsize} hold. Then there exist $C_2, C_3, C_4>0$ such that, for all $t \geq 0$, 
    \[
\EE\left[\|\vv(\xx_t)\|^2\right]
\leq
\begin{cases}
\frac{C_2}{(t+1)^{2\bfrak-1}},& \bfrak\in\bigl(\tfrac12,\tfrac23\bigr),\\[4pt]
\frac{C_3\log(t+1)}{(t+1)^{1/3}},& \bfrak=\tfrac23,\\[4pt]
\frac{C_4}{(t+1)^{1-\bfrak}},& \bfrak\in\bigl(\tfrac23,1\bigr).
\end{cases}
\]
\end{theorem}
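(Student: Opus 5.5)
We track two quantities along the iterates: the distance to a fixed equilibrium, $D_t=\EE[\|\xx_t-\xxstar\|^2]$ for some $\xxstar\in\Xstar$, and the residual $R_t=\EE[\|\vv(\xx_t)\|^2]$. The plan is to show that $D_t$ stays bounded, that the $R_t$ are summable against $\beta_t$, and that $R_t$ is almost non-increasing. Combining the last two by a weighted-averaging argument, in the style of inexact Krasnosel'skii--Mann analyses, gives the last-iterate rate.

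\textbf{Step 1 (boundedness and summability).} We expand $\|\xx_{t+1}-\xxstar\|^2$ using \eqref{SGD-iter}. Since $\vv(\xxstar)=0$, co-coercivity gives $\langle \xx_t-\xxstar,\vv(\xx_t)\rangle\le-\lambda\|\vv(\xx_t)\|^2$. By Assumption \ref{assu:martingale} and $\|\xx_t\|^2\le 2\|\xx_t-\xxstar\|^2+2\|\xxstar\|^2$, this yields
\[
D_{t+1}\le (1+c\beta_t^2)D_t-(2\lambda\beta_t-\beta_t^2)R_t+c\beta_t^2 .
\]
The condition $\beta_t\le\lambda$ makes the coefficient of $R_t$ at least $\lambda\beta_t$. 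Because $\sum_t\beta_t^2<\infty$, unrolling gives $\sup_t D_t\le \bar D$, and hence $\sup_t\EE\|\xx_t\|^2<\infty$. It also gives $\sum_{i\le t}\beta_i R_i\le C$ for a constant $C$ independent of $t$; this is essentially the estimate behind Theorem \ref{thm:time-avg}. A consequence is that the conditional noise variance is bounded in expectation by a constant $\sigma'^2$.

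\textbf{Step 2 (near-monotonicity of the residual).} We compare $\vv(\xx_{t+1})$ with $\vv(\xx_t)$, writing $\Delta_t=\vv(\xx_{t+1})-\vv(\xx_t)$ and $\xx_{t+1}-\xx_t=\beta_t(\vv(\xx_t)+\MM_{t+1})$. Co-coercivity gives
\[
\langle \beta_t\vv(\xx_t),\Delta_t\rangle\le-\lambda\|\Delta_t\|^2-\beta_t\langle\MM_{t+1},\Delta_t\rangle .
\]
Expanding $\|\vv(\xx_{t+1})\|^2=\|\vv(\xx_t)\|^2+2\langle \vv(\xx_t),\Delta_t\rangle+\|\Delta_t\|^2$ and using $\beta_t\le\lambda$ gives
\[
\|\vv(\xx_{t+1})\|^2\le\|\vv(\xx_t)\|^2-\bigl(\tfrac{2\lambda}{\beta_t}-1\bigr)\|\Delta_t\|^2-2\langle \MM_{t+1},\Delta_t\rangle .
\]
The cross term is not mean-zero, since $\Delta_t$ depends on $\MM_{t+1}$. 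We bound it by Young's inequality,
\[
2|\langle\MM_{t+1},\Delta_t\rangle|\le \tfrac{\lambda}{\beta_t}\|\Delta_t\|^2+\tfrac{\beta_t}{\lambda}\|\MM_{t+1}\|^2 ,
\]
which is absorbed by the negative term. Taking expectations and using Step 1 gives $R_{t+1}\le R_t+c'\beta_t$.

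\textbf{Step 3 (combining; the main obstacle).} This $O(\beta_t)$ drift is too crude to be used naively, and balancing it against summability is where the three regimes arise. For $s\le t$ we have $R_t\le R_s+c'\sum_{i=s}^{t-1}\beta_i$. We average this over $s$ in a window $[t-k,t]$ with weights $\beta_s$, and use $\sum\beta_sR_s\le C$ together with $\sum_{s=t-k}^t\beta_s\gtrsim k\beta_t$:
\[
R_t\lesssim \frac{C}{k\beta_t}+k\beta_t\cdot c' .
\]
If this trade-off gives a worse rate than claimed, we will refine Step 2. One refinement is to keep $\EE\langle\MM_{t+1},\Delta_t\rangle$ more carefully: since $\vv$ is $1/\lambda$-Lipschitz, it is $O(\beta_t\sigma'^2)$. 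The other is to exploit the tail sums $\sum_{i\ge s}\beta_iR_i$. Either way we aim for a drift of order $\beta_t^2$, i.e.
\[
R_{t+1}\le R_t+c''\beta_t^2 .
\]
The window argument then gives
\[
R_t\lesssim \frac{1}{k\beta_t}+\beta_t^2k .
\]
With the constraint $k\le t$, this optimizes to $(t+1)^{-(1-\bfrak)}$ for $\bfrak>2/3$. For $\bfrak<2/3$ the window is capped at $k\sim t$, and the tail term $\sum_{i>t/2}\beta_i^2\sim t^{1-2\bfrak}$ dominates, giving $t^{-(2\bfrak-1)}$. At $\bfrak=2/3$ both terms are of order $t^{-1/3}$, and summing $\beta_i^2\sim 1/i^{4/3}$ against the weights produces the $\log(t+1)$ factor. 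The main difficulty is establishing the $O(\beta_t^2)$ drift rather than $O(\beta_t)$ under the affine noise model. This requires handling the correlation between $\MM_{t+1}$ and $\Delta_t$ with the $\lambda$-co-coercivity margin and the uniform moment bound from Step 1.
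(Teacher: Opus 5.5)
Your Step 1 is correct and is essentially Lemmas~\ref{lemma:intermediate}--\ref{lemma:sum_of_errors}. The algebra of your Step 2 is exactly the computation behind Lemma~\ref{lemma:noise-avg}(c). The gap is that you run Step 2 on the raw iterate $\xx_t$, whose increment carries the full noise $\MM_{t+1}$ with $\EE\|\MM_{t+1}\|^2=O(1)$, so Young's inequality only gives $R_{t+1}\le R_t+c'\beta_t$.

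With that drift, Step 3 yields no rate at all. For any window of step-size mass $S=\sum_s\beta_s$ you get $R_t\le C/S+c'S$, and $\min_S(C/S+c'S)=2\sqrt{Cc'}$ does not decay. Everything therefore hinges on the $O(\beta_t^2)$ drift, which you identify as the main difficulty but do not establish. Neither refinement supplies it.

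Your first refinement gives $|\EE\langle\MM_{t+1},\Delta_t\rangle|=O(\beta_t\sigma'^2)$, which is still of order $\beta_t$. This loss is intrinsic to applying co-coercivity to the noisy increment $\xx_{t+1}-\xx_t$. For $\vv(\xx)=Q\xx$ one has $\EE[\langle\MM_{t+1},\Delta_t\rangle\mid\FF_t]=\beta_t\EE[\MM_{t+1}^\top Q\MM_{t+1}\mid\FF_t]\le 0$, so $-2\langle\MM_{t+1},\Delta_t\rangle$ adds $\Theta(\beta_t)$. The term $-(2\lambda/\beta_t-1)\|\Delta_t\|^2$ does not cancel this in general; take noise along an eigendirection of $Q$ with eigenvalue strictly between $-1/\lambda$ and $0$. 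This happens even though the true drift in this linear case is only $\beta_t^2\EE\|Q\MM_{t+1}\|^2$. Since $\vv$ is merely Lipschitz, $\|\vv(\cdot)\|^2$ need not be smooth, so a second-order expansion of $R_{t+1}$ around $\xx_t$ cannot rescue this either. Your second refinement (tail sums) is not a concrete argument.

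The missing idea is the paper's noise averaging. Set $\UU_0=0$, $\UU_{t+1}=(1-\beta_t)\UU_t+\beta_t\MM_{t+1}$ and $\zz_t=\xx_t-\UU_t$. Then $\zz_{t+1}=\zz_t+\beta_t(\vv(\zz_t)+\ee_t)$ with $\ee_t=\vv(\xx_t)-\vv(\zz_t)+\UU_t$, so $\|\ee_t\|\le(1+1/\lambda)\|\UU_t\|$.

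Orthogonality of the martingale increments and your bound $\EE\|\MM_{t+1}\|^2\le\sigma'^2$ give $\EE\|\UU_t\|^2\le\sigma'^2S_t$, where $S_{t+1}=(1-\beta_t)S_t+\beta_t^2$. The inequality $\beta_t-\beta_{t+1}\le\beta_t^2/2$ yields $S_t\le2\beta_t$ by induction, hence $\EE\|\ee_t\|^2=O(\beta_t)$. Now run your Step 2 verbatim on $\zz_t$ with $\ee_t$ in place of $\MM_{t+1}$. Young's inequality then costs $(\beta_t/\lambda)\EE\|\ee_t\|^2=O(\beta_t^2)$, which is exactly the drift you need.

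Summability transfers via $\|\vv(\zz_i)\|^2\le2\|\vv(\xx_i)\|^2+(2/\lambda^2)\|\UU_i\|^2$ and $\sum_i\beta_i^2<\infty$. Your Step 3 then applies to $\EE\|\vv(\zz_t)\|^2$. The same inequality with the roles of $\xx_t$ and $\zz_t$ swapped returns to $R_t$ at an extra cost $O(\beta_t)$, which is dominated by all three rates.

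At that point your window averaging is fine. The paper averages over all $i\le t$, which is where its $\log(t+1)$ comes from; the window $[t/2,t]$ would avoid it. One small correction: for $\bfrak<2/3$ the window is not actually capped, since the optimum $k\sim\beta_t^{-3/2}$ lies below $t$. Taking $k\sim t/2$ already gives the claimed rate.
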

The proof for this theorem has been provided in Appendix \ref{app:proof:last-iterate}. The best bound according to this result is obtained for $\bfrak=2/3$, yielding a rate of $\mathcal{O}(\log(t)/t^{1/3})$.

\section{Proof Outline}\label{sec:outline}
In this section, we present a proof sketch for the results in Section \ref{sec:results} through a series of lemmas. The proofs for these lemmas have been presented in Appendix \ref{app:proof_outline}. Our proof draws inspiration from the techniques used in the analysis of inexact Krasnosel'skii-Mann (KM) iterations \cite{Liang-inexact, Davis-splitting}, but requires additional arguments to handle the stochastic noise.

Recall that the update rule for player $n$ is given by $x_{n,t+1}=x_{n,t}+\beta_t\hat{v}_{n,t}$ where $\hat{v}_{n,t}=v_n(\xx_t)+M_{n,t+1}$. Combining these updates yields the following vector form
\begin{equation}\label{SGD-iter-vector}
    \xx_{t+1}=\xx_t+\beta_t(\vv(\xx_t)+\MM_{t+1}).
\end{equation}
Since the game $\GG$ is co-coercive, the operator $-\vv(\cdot)$ is co-coercive. Our first lemma obtains an intermediate bound on $\EE[\|\xx_t-\xxstar\|^2]$ where $\xxstar\in\Xstar$ is some NE for the game $\GG$.
\begin{lemma}\label{lemma:intermediate}
    Suppose Assumptions \ref{assu:non-empty}-\ref{assu:stepsize} are satisfied. Then for all $t\geq 0$, and for some $\xxstar\in\Xstar$,
    \begin{align*}
&\EE\left[\|\xx_{t+1}-\xxstar\|^2\mid \FF_t\right]\\
&\leq \left(1+2\beta_t^2\sigma^2\right)\|\xx_t-\xxstar\|^2-\lambda\beta_t\|\vv(\xx_t)\|^2\\
&\;\;+\beta_t^2\sigma^2(1+2\|\xxstar\|^2).
\end{align*}
\end{lemma}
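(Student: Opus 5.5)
Fix any $\xxstar\in\Xstar$ (nonempty by Assumption~\ref{assu:non-empty}); the bound will hold for every such point. The plan is to expand the squared distance after one step of \eqref{SGD-iter-vector} and take conditional expectations. Writing $\xx_{t+1}-\xxstar=(\xx_t-\xxstar)+\beta_t(\vv(\xx_t)+\MM_{t+1})$, we get
\begin{align*}
\|\xx_{t+1}-\xxstar\|^2&=\|\xx_t-\xxstar\|^2+2\beta_t\langle \xx_t-\xxstar,\vv(\xx_t)+\MM_{t+1}\rangle\\
&\;\;+\beta_t^2\|\vv(\xx_t)+\MM_{t+1}\|^2.
\end{align*}
Since $\xx_t$ is $\FF_t$-measurable and $\EE[\MM_{t+1}\mid\FF_t]=0$ by Assumption~\ref{assu:martingale}, the cross term with $\MM_{t+1}$ has zero conditional mean. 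For the same reason, the last term satisfies $\EE[\|\vv(\xx_t)+\MM_{t+1}\|^2\mid\FF_t]=\|\vv(\xx_t)\|^2+\EE[\|\MM_{t+1}\|^2\mid\FF_t]$.

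Next, apply Definition~\ref{defn:co-coercive} with $\xx'=\xx_t$ and $\xx=\xxstar$, using $\vv(\xxstar)=0$. This gives $\langle \xx_t-\xxstar,\vv(\xx_t)\rangle\le -\lambda\|\vv(\xx_t)\|^2$, so the drift term contributes at most $-2\lambda\beta_t\|\vv(\xx_t)\|^2$. Adding the $\beta_t^2\|\vv(\xx_t)\|^2$ from the quadratic term leaves $-(2\lambda\beta_t-\beta_t^2)\|\vv(\xx_t)\|^2$. Assumption~\ref{assu:stepsize} requires $T_0\ge\lambda^{-1/\bfrak}$, which gives $\beta_t\le\lambda$. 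Hence $2\lambda\beta_t-\beta_t^2\ge\lambda\beta_t$, and the combined coefficient is at most $-\lambda\beta_t$, as claimed.

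For the noise term, use Assumption~\ref{assu:martingale} together with $\|\xx_t\|^2\le 2\|\xx_t-\xxstar\|^2+2\|\xxstar\|^2$. This yields
\[
\beta_t^2\,\EE[\|\MM_{t+1}\|^2\mid\FF_t]\le \beta_t^2\sigma^2\bigl(1+2\|\xxstar\|^2\bigr)+2\beta_t^2\sigma^2\|\xx_t-\xxstar\|^2 .
\]
Collecting the terms gives exactly the stated inequality. There is no real obstacle here. The only delicate point is absorbing the $\beta_t^2\|\vv(\xx_t)\|^2$ term, and that is precisely what the condition $\beta_t\le\lambda$ built into Assumption~\ref{assu:stepsize} is for.
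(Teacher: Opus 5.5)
Your proposal is correct and follows essentially the same route as the paper's proof: expand $\|\xx_{t+1}-\xxstar\|^2$, kill the cross terms by the martingale-difference property, apply co-coercivity with $\vv(\xxstar)=0$, absorb the $\beta_t^2\|\vv(\xx_t)\|^2$ term via $\beta_t\le\lambda$, and bound the noise using $\|\xx_t\|^2\le 2\|\xx_t-\xxstar\|^2+2\|\xxstar\|^2$. Your observation that the bound holds for every $\xxstar\in\Xstar$, not just some, also matches what the paper's argument actually shows.
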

The above lemma yields almost sure stability of the iterates $\{\xx_t\}$, i.e., almost sure boundedness of the iterates, by application of the Robbins-Siegmund Theorem \cite{Robbins-Siegmund} on the recursion. This is a key requirement for the ODE approach to convergence analysis of stochastic approximation (SA) algorithms \cite{Borkar-book}. The remainder of the proof for almost sure convergence follows by studying the ODE and using a Lyapunov function to show convergence to the set of NE.


Our next lemma shows that the iterates are bounded in expectation, which is essential to obtain a bound on the right hand side of the recursion.
\begin{lemma}\label{lemma:bounded_iter}
    Suppose Assumptions \ref{assu:non-empty}-\ref{assu:stepsize} hold. Then the iterates $\xx_t$ are bounded in expectation. Specifically, there exists $\Gamma_1>0$ such that for all $t\geq 0$, and for $\xxstar\in\Xstar$, 
    $$\EE\left[\|\xx_t-\xxstar\|^2\right]\leq \Gamma_1.$$
\end{lemma}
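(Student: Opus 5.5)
The plan is to take total expectations in Lemma~\ref{lemma:intermediate} and unroll the resulting scalar recursion, using square-summability of the stepsizes. Set $a_t=\EE[\|\xx_t-\xxstar\|^2]$, with $\xxstar\in\Xstar$ fixed (it exists by Assumption~\ref{assu:non-empty}). Taking expectations in Lemma~\ref{lemma:intermediate} and dropping the nonpositive term $-\lambda\beta_t\EE[\|\vv(\xx_t)\|^2]$ gives
\[
a_{t+1}\le (1+2\beta_t^2\sigma^2)\,a_t+\beta_t^2\sigma^2(1+2\|\xxstar\|^2).
\]
Both terms are finite at every step: the conditional second moment of $\MM_{t+1}$ is controlled by $\|\xx_t\|^2$, so $a_t<\infty$ follows inductively from $\xx_0$ being deterministic (or square-integrable).

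Next unroll the recursion. Write $c=\sigma^2(1+2\|\xxstar\|^2)$ and $P_t=\prod_{k=0}^{t-1}(1+2\beta_k^2\sigma^2)$. Induction gives
\[
a_t\le P_t\Bigl(a_0+c\sum_{k=0}^{t-1}\beta_k^2\Bigr).
\]
Using $1+y\le e^y$,
\[
P_t\le \exp\Bigl(2\sigma^2\sum_{k\ge 0}\beta_k^2\Bigr).
\]
By Assumption~\ref{assu:stepsize}, $\sum_k\beta_k^2=\sum_k (k+T_0)^{-2\bfrak}$, which is finite since $2\bfrak>1$. Comparing with an integral gives the explicit bound
\[
S:=\sum_k\beta_k^2\le T_0^{-2\bfrak}+\frac{T_0^{1-2\bfrak}}{2\bfrak-1}.
\]
Hence
\[
\Gamma_1=e^{2\sigma^2 S}\bigl(\|\xx_0-\xxstar\|^2+cS\bigr)
\]
works uniformly in $t$.

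The main point to check is that the constant in the noise bound is expressed relative to $\xxstar$. The raw assumption bounds the noise by $\sigma^2(1+\|\xx_t\|^2)$, and this has to be converted via $\|\xx_t\|^2\le 2\|\xx_t-\xxstar\|^2+2\|\xxstar\|^2$. Lemma~\ref{lemma:intermediate} already does this, which is why the multiplicative factor is $1+2\beta_t^2\sigma^2$ rather than something involving $\|\xx_t\|$.

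The only real obstacle is that the multiplicative factor exceeds one, so we cannot simply discard it. Square-summability is exactly what keeps the product bounded; without it (e.g., $\bfrak\le 1/2$) this argument would fail. If one wants the statement for every $\xxstar\in\Xstar$ rather than a fixed one, the same argument applies verbatim with $\Gamma_1$ depending on $\xxstar$. Alternatively, triangle inequality transfers the bound between two equilibria.
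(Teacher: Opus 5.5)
Your proposal is correct and follows the paper's proof essentially line for line. You take expectations in Lemma~\ref{lemma:intermediate}, drop the residual term, unroll the recursion, and bound the product by $e^{2\sigma^2\sum_i\beta_i^2}$, which gives exactly the paper's $\Gamma_1$; your explicit integral bound on $\sum_i\beta_i^2$ and the finiteness check on $\EE[\|\xx_t-\xxstar\|^2]$ are small additions the paper omits.
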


Our next lemma shows that the sum of weighted residual errors is bounded by a constant.
\begin{lemma}\label{lemma:sum_of_errors}
    Suppose Assumptions \ref{assu:non-empty}-\ref{assu:stepsize} hold. Then there exists $\Gamma_2$ such that for all $t\geq 0$,
    $$\sum_{i=0}^t \beta_i\EE\left[\|\vv(\xx_i)\|^2\right]\leq \Gamma_2.$$
\end{lemma}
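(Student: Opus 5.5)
The plan is to take total expectations in the recursion of Lemma~\ref{lemma:intermediate}, telescope it, and control the growth factor using Lemma~\ref{lemma:bounded_iter}.

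Write $a_i=\EE[\|\xx_i-\xxstar\|^2]$, with $\xxstar$ the same Nash equilibrium used in Lemmas~\ref{lemma:intermediate} and~\ref{lemma:bounded_iter}. The tower property applied to Lemma~\ref{lemma:intermediate} gives, for every $i\geq 0$,
\[
\lambda\beta_i\EE\left[\|\vv(\xx_i)\|^2\right]\leq a_i-a_{i+1}+2\beta_i^2\sigma^2 a_i+\beta_i^2\sigma^2(1+2\|\xxstar\|^2).
\]
Summing from $i=0$ to $t$, the first two terms telescope to $a_0-a_{t+1}\leq a_0=\|\xx_0-\xxstar\|^2$, since $a_{t+1}\geq 0$.

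Next, bound the term $2\beta_i^2\sigma^2 a_i$ by $2\beta_i^2\sigma^2\Gamma_1$ using Lemma~\ref{lemma:bounded_iter}. The sum is then at most
\[
\|\xx_0-\xxstar\|^2+\sigma^2\bigl(2\Gamma_1+1+2\|\xxstar\|^2\bigr)\sum_{i=0}^\infty\beta_i^2.
\]
The series converges by Assumption~\ref{assu:stepsize}, because $2\bfrak>1$. It can be bounded explicitly by an integral comparison, for example
\[
\sum_{i\geq 0}(i+T_0)^{-2\bfrak}\leq T_0^{-2\bfrak}+\frac{T_0^{1-2\bfrak}}{2\bfrak-1}.
\]
Dividing through by $\lambda$ gives the claim with
\[
\Gamma_2=\lambda^{-1}\Bigl[\|\xx_0-\xxstar\|^2+\sigma^2\bigl(2\Gamma_1+1+2\|\xxstar\|^2\bigr)\textstyle\sum_i\beta_i^2\Bigr].
\]

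I do not expect a real obstacle here, since Lemma~\ref{lemma:bounded_iter} already handles the multiplicative growth term $(1+2\beta_t^2\sigma^2)$. Without that lemma, one would instead unroll the recursion with the product weights $\prod_{k\geq i}(1+2\beta_k^2\sigma^2)\leq \exp\bigl(2\sigma^2\sum_k\beta_k^2\bigr)$ and obtain the same conclusion.

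One point needs care: the taking of expectations must be justified. Each $a_i$ is finite by Lemma~\ref{lemma:bounded_iter}, or equivalently by induction using the recursion itself, so the conditional inequality can be integrated term by term.
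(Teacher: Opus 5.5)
Your proposal is correct and is essentially the paper's proof. Both take expectations in Lemma~\ref{lemma:intermediate}, telescope, drop $\EE[\|\xx_{t+1}-\xxstar\|^2]$, and bound the growth term via Lemma~\ref{lemma:bounded_iter} and square-summability of $\beta_i$, reaching the same constant $\Gamma_2$. Your explicit integral bound on $\sum_i\beta_i^2$ and the integrability remark are valid, harmless additions.
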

The time-average guarantee (Theorem \ref{thm:time-avg}) follows directly from this lemma by simply noting that $\beta_i$ is non-increasing, which implies that
$$\beta_t\sum_{i=0}^t\EE\left[\|\vv(\xx_i)\|^2\right]\leq \sum_{i=0}^t\beta_i\EE\left[\|\vv(\xx_i)\|^2\right]\leq\Gamma_2.$$

But our primary goal is to obtain a last-iterate bound, i.e., obtain a bound on $\EE[\|\vv(\xx_t)\|^2]$. For this, we would like to get a bound on $\EE[\|\vv(\xx_i)\|^2]$ in terms of $\EE[\|\vv(\xx_t)\|^2]$. But the martingale difference noise $\MM_{t+1}$ does not allow us to obtain a useful bound directly. To solve this, we define an averaged error sequence $\UU_{t+1}=(1-\beta_t)\UU_t+\beta_t\MM_{t+1}$, with $\UU_0=0$, and a modified iterate $\zz_t=\xx_t-\UU_t$. We first show that $\EE[\|\xx_t-\zz_t\|^2]$ decays at a sufficiently fast rate, implying that we can analyze $\EE[\|\vv(\zz_t)\|^2]$ instead of $\EE[\|\vv(\xx_t)\|^2]$. The next step is studying the iterates $\zz_t$ and obtaining an iteration for these iterates. And finally, we fulfill the purpose of studying $\zz_t$ in terms of $\xx_t$, by obtaining a bound on $\EE[\|\vv(z_i)\|^2]$ in terms of $\EE[\|\vv(z_t)\|^2]$.
\begin{lemma}\label{lemma:noise-avg}
    Suppose Assumptions \ref{assu:non-empty}-\ref{assu:stepsize} hold.
    \begin{enumerate}[label=\alph*)]
        \item There exists $\Gamma_3>0$ such that for all $t\geq 0$,
        $$\EE\left[\|\xx_t-\zz_t\|^2\right]=\EE\left[\|\UU_t\|^2\right]\leq \Gamma_3\beta_t.$$
        \item There exists $\Gamma_4>0$ such that for all $t\geq 0$,
        $$\sum_{i=0}^t\beta_i\EE\left[\|\vv(\zz_i)\|^2\right]\leq \Gamma_4.$$
        \item For all $i\leq t-1$, there exists $\Gamma_5>0$ such that
        $$\EE\left[\|\vv(\zz_i)\|^2\right]\geq \EE\left[\|\vv(\zz_t)\|^2\right]-\Gamma_5\sum_{j=i}^{t-1}\beta_j^2.$$
    \end{enumerate}
\end{lemma}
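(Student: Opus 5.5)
Each of the three parts rests on a different recursion, and I would handle them in order.

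For part (a), I would expand the recursion $\UU_{t+1}=(1-\beta_t)\UU_t+\beta_t\MM_{t+1}$ directly. Since $\UU_t$ is $\FF_t$-measurable and $\EE[\MM_{t+1}\mid\FF_t]=0$, the cross term vanishes, which gives
\[
\EE\|\UU_{t+1}\|^2\le(1-\beta_t)^2\EE\|\UU_t\|^2+\beta_t^2\sigma^2(1+\EE\|\xx_t\|^2).
\]
Lemma~\ref{lemma:bounded_iter} gives $\EE\|\xx_t\|^2\le 2\Gamma_1+2\|\xxstar\|^2$, so the last term is at most $K\beta_t^2$. I would then prove $\EE\|\UU_t\|^2\le\Gamma_3\beta_t$ by induction, using $(1-\beta_t)^2\le 1-\beta_t$. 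The inductive step needs
\[
(1-\beta_t)\Gamma_3\beta_t+K\beta_t^2\le\Gamma_3\beta_{t+1}.
\]
Because $\beta_t-\beta_{t+1}\le 0.5\beta_t^2$, it suffices that $-\Gamma_3\beta_t^2+K\beta_t^2\le -0.5\Gamma_3\beta_t^2$, and $\Gamma_3=2K$ achieves this. The base case is $\UU_0=0$.

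For part (b), I would use the $\lambda$-co-coercivity of $-\vv$, which makes $\vv$ $(1/\lambda)$-Lipschitz. Hence $\|\vv(\zz_i)\|^2\le 2\|\vv(\xx_i)\|^2+2\lambda^{-2}\|\UU_i\|^2$. Summing with weights $\beta_i$ and applying Lemma~\ref{lemma:sum_of_errors} together with part (a) gives the bound $2\Gamma_2+2\lambda^{-2}\Gamma_3\sum_i\beta_i^2$. This is finite because $\bfrak>1/2$, so I would set $\Gamma_4$ equal to it.

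Part (c) is the main step. I would first derive the $\zz$-recursion. From $\xx_{t+1}=\xx_t+\beta_t\vv(\xx_t)+\beta_t\MM_{t+1}$ and $\UU_{t+1}=\UU_t-\beta_t\UU_t+\beta_t\MM_{t+1}$, the noise cancels and
\[
\zz_{t+1}=\zz_t+\beta_t\bigl(\vv(\xx_t)+\UU_t\bigr),
\]
where the increment is $\FF_t$-measurable. Set $\mathbf{w}_t=\vv(\xx_t)+\UU_t$, so that $\zz_{t+1}=\zz_t+\beta_t\mathbf{w}_t$, and write $\vv(\xx_t)=\vv(\zz_t)+\mathbf{e}_t$ with $\|\mathbf{e}_t\|\le\lambda^{-1}\|\UU_t\|$.

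I would then use co-coercivity in the KM style to show that $\|\vv(\zz_t)\|^2$ is almost nonincreasing. Write
\[
\|\vv(\zz_{t+1})\|^2-\|\vv(\zz_t)\|^2=2\langle\vv(\zz_t),\Delta\rangle+\|\Delta\|^2,\qquad \Delta=\vv(\zz_{t+1})-\vv(\zz_t).
\]
Co-coercivity applied to the step $\beta_t\mathbf{w}_t$ gives $\beta_t\langle\mathbf{w}_t,\Delta\rangle\le-\lambda\|\Delta\|^2$. Substituting $\mathbf{w}_t=\vv(\zz_t)+\mathbf{e}_t+\UU_t$ yields
\[
2\langle\vv(\zz_t),\Delta\rangle\le-\frac{2\lambda}{\beta_t}\|\Delta\|^2-2\langle\mathbf{e}_t+\UU_t,\Delta\rangle.
\]
Young's inequality then gives
\[
-2\langle\mathbf{e}_t+\UU_t,\Delta\rangle\le\frac{\lambda}{\beta_t}\|\Delta\|^2+\frac{\beta_t}{\lambda}\|\mathbf{e}_t+\UU_t\|^2.
\]
Since $\beta_t\le\lambda$, the coefficient of $\|\Delta\|^2$ is $1-\lambda/\beta_t\le0$, and that term can be dropped. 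The result is
\[
\|\vv(\zz_{t+1})\|^2\le\|\vv(\zz_t)\|^2+\frac{\beta_t}{\lambda}\bigl(1+\lambda^{-1}\bigr)^2\|\UU_t\|^2.
\]
Taking expectations and using part (a), the increment is at most $\lambda^{-1}(1+\lambda^{-1})^2\Gamma_3\beta_t^2$. Telescoping from $i$ to $t$ gives (c) with $\Gamma_5=\lambda^{-1}(1+\lambda^{-1})^2\Gamma_3$.

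The obstacle I anticipate is exactly this step. Naively the error term looks like order $\beta_t\|\UU_t\|$, which would only be $\beta_t^{3/2}$ in expectation. Getting a quadratic error term, and hence $\beta_t^2$ after applying (a), depends on using co-coercivity to absorb $\|\Delta\|^2$ with the $1/\beta_t$ factor. If the constants do not close as written, I would fall back to splitting the Young inequality with a different weight and using $\beta_t\le\lambda$ more carefully.
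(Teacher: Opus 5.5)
Your proposal is correct and matches the paper's proof essentially step for step. In (a) you use the same induction, giving $\EE[\|\UU_t\|^2]\le 2D_2\beta_t$ via $\beta_t-\beta_{t+1}\le\tfrac12\beta_t^2$; you induct on the one-step recursion rather than on the unrolled sum $S_t$, which is equivalent. Part (b) uses the same Lipschitz splitting. Part (c) uses the same co-coercivity-plus-Young argument, yielding $\|\vv(\zz_{t+1})\|^2\le\|\vv(\zz_t)\|^2+\frac{\beta_t}{\lambda}(1+\lambda^{-1})^2\|\UU_t\|^2$ and hence $\Gamma_5=\lambda^{-1}(1+\lambda^{-1})^2\Gamma_3$. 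Your constants close exactly as written, so no fallback is needed; your Young weight $\lambda/\beta_t$ is only a different bookkeeping of the paper's choice $c=2\lambda/\beta_t-1$.
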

The above lemma implies that 
$$\EE\left[\|\vv(\zz_t)\|^2\right]\sum_{i=0}^t\beta_i\leq \Gamma_4+\Gamma_5\sum_{i=0}^t\beta_i\sum_{j=i}^{t-1}\beta_j^2.$$
The summation on the right hand side is bounded for $\bfrak>2/3$, $\mathcal{O}(\log(t))$ for $\bfrak=2/3$, and $\mathcal{O}(t^{2-3b})$ for $1/2<b<2/3$. This gives us the following last-iterate bound for $\zz_t$.
\begin{lemma}\label{lemma:last-iter-bound-z}
    Suppose Assumptions \ref{assu:non-empty}-\ref{assu:stepsize} hold. Then there exists $\Gamma_6, \Gamma_7, \Gamma_8>0$ such that for all $t\geq 0$,
        \[
\EE\left[\|\vv(\zz_t)\|^2\right]
\leq
\begin{cases}
\frac{\Gamma_6}{(t+1)^{2\bfrak-1}},& \bfrak\in\bigl(\tfrac12,\tfrac23\bigr),\\[4pt]
\frac{\Gamma_7\log(t+1)}{(t+1)^{1/3}},& \bfrak=\tfrac23,\\[4pt]
\frac{\Gamma_8}{(t+1)^{1-\bfrak}},& \bfrak\in\bigl(\tfrac23,1\bigr).
\end{cases}
\]
\end{lemma}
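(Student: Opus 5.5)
The plan is to combine parts (b) and (c) of Lemma~\ref{lemma:noise-avg} as indicated in the text preceding the statement, and then reduce everything to elementary estimates of sums of powers of $(i+T_0)$.

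\textbf{Step 1: the master inequality.} Fix $t\geq 1$ (the case $t=0$ is absorbed into the constants). For each $i\leq t-1$, multiply part (c) by $\beta_i$. Adding the trivial term $i=t$ and summing over $i$, part (b) gives
\[
\EE\left[\|\vv(\zz_t)\|^2\right]\sum_{i=0}^t\beta_i\leq \Gamma_4+\Gamma_5\sum_{i=0}^{t-1}\beta_i\sum_{j=i}^{t-1}\beta_j^2 .
\]
It then remains to lower bound $S_t:=\sum_{i=0}^t\beta_i$ and upper bound the double sum $D_t:=\sum_{i=0}^{t-1}\beta_i\sum_{j=i}^{t-1}\beta_j^2$.

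\textbf{Step 2: the lower bound on $S_t$.} By monotonicity and integral comparison,
\[
S_t\geq \int_{T_0}^{t+T_0+1}s^{-\bfrak}\,ds=\frac{(t+T_0+1)^{1-\bfrak}-T_0^{1-\bfrak}}{1-\bfrak}.
\]
This is at least $c\,(t+1)^{1-\bfrak}$ for a constant $c>0$ depending on $T_0$ and $\bfrak$, uniformly in $t\geq 0$, since the right side is positive and grows like $(t+1)^{1-\bfrak}$. Alternatively one can use the cruder bound $S_t\geq (t+1)\beta_t\geq (t+1)^{1-\bfrak}/(1+T_0)^{\bfrak}$, which is cleaner and suffices.

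\textbf{Step 3: the upper bound on $D_t$.} Exchange the order of summation:
\[
D_t=\sum_{j=0}^{t-1}\beta_j^2\sum_{i=0}^{j}\beta_i\leq \sum_{j=0}^{t-1}\beta_j^2\,\frac{(j+T_0)^{1-\bfrak}}{1-\bfrak}.
\]
The summand is $O\big((j+T_0)^{1-3\bfrak}\big)$, which gives three regimes:
\begin{itemize}
\item if $3\bfrak>2$, the series converges, so $D_t\leq C$;
\item if $\bfrak=2/3$, it is harmonic, so $D_t\leq C\log(t+1)$ (adjusting constants so that the bound is valid at small $t$, for example by using $1+\log$ or by noting that $t\geq1$);
\item if $\bfrak<2/3$, then $D_t\leq C(t+1)^{2-3\bfrak}$.
\end{itemize}

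\textbf{Step 4: dividing through.} Divide the master inequality by $S_t\gtrsim(t+1)^{1-\bfrak}$. For $\bfrak>2/3$ this gives $\Gamma_8/(t+1)^{1-\bfrak}$. For $\bfrak=2/3$ it gives $(\Gamma_4+C\log(t+1))/(t+1)^{1/3}$; the constant $\Gamma_4$ is absorbed into $\Gamma_7\log(t+1)$ for $t\geq1$, and $t=0$ needs separate handling. For $\bfrak<2/3$ the bound is
\[
\frac{\Gamma_4}{(t+1)^{1-\bfrak}}+\frac{C(t+1)^{2-3\bfrak}}{(t+1)^{1-\bfrak}}=O\big((t+1)^{-(2\bfrak-1)}\big),
\]
since $1-\bfrak>2\bfrak-1$ exactly when $\bfrak<2/3$.

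\textbf{The main obstacle: uniformity at small $t$.} The only delicate point is making the constants uniform for all $t\geq 0$, especially at $t=0$ in the logarithmic case, where $\log 1=0$. To handle it, I would bound $\EE[\|\vv(\zz_0)\|^2]=\|\vv(\xx_0)\|^2$ directly, since $\UU_0=0$, and treat $t=0$ as a special case. Either interpret the statement for $t\geq1$, or adjust the constants via $\log(t+2)$-type slack. If a literal $\log(t+1)$ at $t=0$ is required, the clean fix is to state the bound for $t\geq1$ and note that $t=0$ is trivially bounded.
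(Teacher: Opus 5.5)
Your proposal is correct and follows essentially the same route as the paper: the master inequality from Lemma~\ref{lemma:noise-avg}(b)--(c), the lower bound $\sum_{i=0}^t\beta_i\geq(t+1)\beta_t$, swapping the order of summation in the double sum, and a three-way split according to the exponent $1-3\bfrak$. Your concern about $t=0$ when $\bfrak=2/3$ is also justified. The paper's constant $\Gamma_7$ relies on $1\leq\log(t+1)/\log 2$, which holds only for $t\geq 1$, so the bound as literally stated fails at $t=0$ unless $\vv(\xx_0)=0$. Restricting to $t\geq 1$, or using $\log(t+2)$, is the right fix.
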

The final step of the proof uses the Lipschitz nature of $\vv(\cdot)$ to bound $\EE[\|\vv(\xx_t)\|^2]$ in terms of $\EE[\|\vv(\zz_t)\|^2]$, completing the proof for the last-iterate bound.

\section{Conclusion}\label{sec:conc}
We studied last-iterate bounds for vanilla SGD in co-coercive games under noisy feedback. Specifically, we obtained the first last-iterate bound of $\mathcal{O}(\log(t)/t^{1/3})$ under the general noise model in which the variance of the noise can scale affinely with the squared norm of the iterates. This noise model subsumes the noise models in prior works which relied on vanishing noise for obtaining last-iterate bounds. As intermediate results, we also established almost sure convergence of the iterates to the Nash equilibrium set and derived time-average convergence guarantees.

Future directions include studying rates for co-coercive games when modified algorithms such as optimistic gradient descent or extragradient methods are used, and extending our results to the payoff-based feedback setting, where agents observe only their utility values.

\bibliographystyle{IEEEtran}
\bibliography{ref.bib}

\appendices
\section{Proofs for Theorems \ref{thm:as}, \ref{thm:time-avg}, \ref{thm:last-iterate}}

\subsection{\textbf{Proof for Theorem \ref{thm:as}}}\label{app:proof:as}
\begin{proof}
Let us first analyze solutions of the following ODE.
\begin{equation}\label{ODE}
    \dot{\xx}(t)=\vv(\xx(t)).
\end{equation}
Define $V(\xx)=\|\xx-\xxstar\|^2$ for some $\xxstar\in\Xstar$. Then we have that $V(\xx)\geq 0$ for all $\xx\in\RR^{Nd}$ and $V(\xx)\rightarrow\infty$ as $\xx\rightarrow\infty$. Also, note that 
\begin{align*}
    \langle \nabla V(\xx), \vv(\xx)\rangle &= 2\langle \xx-\xxstar, \vv(\xx)\rangle\\
    &=2\langle \xx-\xxstar, \vv(\xx)-\vv(\xxstar)\rangle\\
    &\leq -2\lambda\|\vv(\xx)-\vv(\xxstar)\|^2\\
    &=-2\lambda\|\vv(\xx)\|^2\leq 0,
\end{align*}
with equality iff $\vv(\xx)=0$. Here, the second and last equality follow from the fact that $\vv(\xxstar)=0$, and the inequality follows from the co-coercivity of the operator $-\vv(\cdot)$. Then by LaSalle's theorem or invariance principle \cite[Theorem 3.5]{slotine1991applied}, all solutions of the ODE globally asymptotically converge to the set $\{\xx|\langle \nabla V(\xx), \vv(\xx)\rangle=0\}=\{\xx|\vv(\xx)=0\}$ which is precisely $\Xstar$, the set of NE. 

Now, we have to relate the iteration \eqref{SGD-iter-vector} with the ODE and show that the iterates also converge to the set. For this, we use the ODE approach for the analysis of stochastic approximation algorithms. We have to first show that the iterates are almost surely bounded. For this, we first define $V_t=\|\xx_t-\xxstar\|^2$, $A_t=2\beta_t^2\sigma^2$, $B_t=\lambda \beta_t\|\vv(\xx_t)\|^2$, and $C_t=\beta_t^2\sigma^2(1+2\|\xxstar\|^2)$. Using Lemma \ref{lemma:intermediate}, we get 
$$\EE[V_{t+1}|\FF_t]\leq (1+A_t)V_t-B_t+C_t.$$
Since $\beta_t$ is square-summable, we have $\sum_{t}A_t<\infty$ and $\sum_t C_t<\infty$. Then, we can apply Robbins-Siegmund Theorem \cite{Robbins-Siegmund} which says that $V_t$ converges to a finite random variable. In particular, $\sup_t V_t<\infty$ almost surely, and hence
\[
\sup_t \|\xx_t\|<\infty
\qquad\text{a.s.}
\]
Thus, the iterates are almost surely bounded. Now, all assumptions required for \cite[Theorem 2.1]{Borkar-book} are satisfied (the map $\vv(\cdot)$ is Lipschitz, the stepsize sequence is non-summable but square-summable, the noise is martingale difference, and the iterates are almost surely bounded). Note that \eqref{ODE} is the limiting ODE for \eqref{SGD-iter-vector}. Since all solutions of the ODE \eqref{ODE} converge to the set $\Xstar$, the iterates $\xx_t$ converge to the set $\Xstar$ almost surely \cite[Theorem 2.1]{Borkar-book}.
\end{proof}

\subsection{\textbf{Proof for Theorem \ref{thm:time-avg}}}\label{app:proof:time-avg}
\begin{proof}
By Lemma~\ref{lemma:sum_of_errors},
\[
\sum_{i=0}^t \beta_i\EE\left[\|\vv(\xx_i)\|^2\right]\le \Gamma_2,
\qquad \forall t\ge 0.
\]
Since $\{\beta_t\}$ is non-increasing,
\[
\beta_t\sum_{i=0}^t \EE\left[\|\vv(\xx_i)\|^2\right]
\le
\sum_{i=0}^t \beta_i\EE\left[\|\vv(\xx_i)\|^2\right]
\le
\Gamma_2.
\]
Therefore,
\begin{align*}
    \frac{1}{t+1}\sum_{i=0}^t \EE\left[\|\vv(\xx_i)\|^2\right]&\leq \frac{\Gamma_2}{(t+1)\beta_t}\\
    &=\Gamma_2\frac{(t+T_0)^\bfrak}{t+1}\\
    &\leq \Gamma_2T_0^{\bfrak}\frac{1}{(t+1)^{1-\bfrak}}.
\end{align*}
Here the last step follows from the fact that $t+T_0\leq T_0(t+1)$. This completes the proof with $C_1=\Gamma_2T_0^{\bfrak}$.
\end{proof}

\subsection{\textbf{Proof for Theorem \ref{thm:last-iterate}}}\label{app:proof:last-iterate}
\begin{proof}
Note that $\vv(\cdot)$ is $(1/\lambda)$-Lipschitz (follows from the co-coercive nature of $-\vv(\cdot)$). Then,
\[
\|\vv(\xx_t)\|
\le
\|\vv(\zz_t)\|+\|\vv(\xx_t)-\vv(\zz_t)\|
\le
\|\vv(\zz_t)\|+\frac{1}{\lambda}\|\UU_t\|.
\]
Hence
\[
\|\vv(\xx_t)\|^2
\le
2\|\vv(\zz_t)\|^2+\frac{2}{\lambda^2}\|\UU_t\|^2.
\]
Taking expectation and using Lemma~\ref{lemma:noise-avg}(a), we get
\[
\EE\left[\|\vv(\xx_t)\|^2\right]
\le
2\EE\left[\|\vv(\zz_t)\|^2\right]
+\frac{2\Gamma_3}{\lambda^2}\beta_t.
\]
Recall that
        \[
\EE\left[\|\vv(\zz_t)\|^2\right]
\leq
\begin{cases}
\frac{\Gamma_6}{(t+1)^{2\bfrak-1}},& \bfrak\in\bigl(\tfrac12,\tfrac23\bigr),\\[4pt]
\frac{\Gamma_7\log(t+1)}{(t+1)^{1/3}},& \bfrak=\tfrac23,\\[4pt]
\frac{\Gamma_8}{(t+1)^{1-\bfrak}},& \bfrak\in\bigl(\tfrac23,1\bigr).
\end{cases}
\]
For $\bfrak\in(1/2,2/3)$, 
$$\beta_t=(t+T_0)^{-\bfrak}\leq (t+1)^{-\bfrak}\leq (t+1)^{-(2\bfrak-1)}.$$
Hence
$$\EE\left[\|\vv(\xx_t)\|^2\right]\leq \left(2\Gamma_6+\frac{2\Gamma_3}{\lambda^2}\right)\frac{1}{(t+1)^{2\bfrak-1}}.$$
This completes the first case with $C_2=2\Gamma_6+2\Gamma_3/\lambda^2$. Similarly, for the other two cases, it can be verified that $\EE[\|\vv(\zz_t)\|^2]$ orderwise dominates the term $\beta_t$. Hence, for $\bfrak=2/3$,
$$\EE\left[\|\vv(\xx_t)\|^2\right]\leq C_3\frac{\log(t+1)}{(t+1)^{1/3}},$$
where $C_3=2\Gamma_7+2\Gamma_3/\lambda^2$, and for  $\bfrak\in(2/3,1)$,
$$\EE\left[\|\vv(\xx_t)\|^2\right]\leq \frac{C_4}{(t+1)^{1-\bfrak}},$$
where $C_4=2\Gamma_8+2\Gamma_3/\lambda^2$.
\end{proof}

\newpage
\onecolumn
\section{Proofs from Section \ref{sec:outline}}\label{app:proof_outline}
\subsection{\textbf{Proof for Lemma \ref{lemma:intermediate}}}
\begin{proof}
Let $\xxstar\in\Xstar$. Then, for all \(t\geq 0\),
\begin{align*}
\xx_{t+1}-\xxstar
&=\xx_t-\xxstar+\beta_t(\vv(\xx_t)-\vv(\xxstar)+\MM_{t+1}).
\end{align*}
Now, using the standard expansion, we get
\begin{subequations}\label{split1}
\begin{align}
\|\xx_{t+1}-\xxstar\|^2
&=\|\xx_t-\xxstar+\beta_t(\vv(\xx_t)-\vv(\xxstar)+\MM_{t+1})\|^2\nonumber\\
&=\|\xx_t-\xxstar\|^2
+\beta_t^2\|\vv(\xx_t)-\vv(\xxstar)+\MM_{t+1}\|^2\label{split11}\\
&\;\;+2\beta_t\langle \xx_t-\xxstar,\vv(\xx_t)-\vv(\xxstar)+\MM_{t+1}\rangle.\label{split12}
\end{align}
\end{subequations}
For the term \eqref{split11}, we note that
\begin{align*}
\|\vv(\xx_t)-\vv(\xxstar)+\MM_{t+1}\|^2
&=\|\vv(\xx_t)-\vv(\xxstar)\|^2+\|\MM_{t+1}\|^2
+2\langle \vv(\xx_t)-\vv(\xxstar),\MM_{t+1}\rangle.
\end{align*}
For the term \eqref{split12}, since \(-\vv(\cdot)\) is \(\lambda\)-co-coercive, we have
\begin{align*}
\langle \xx_t-\xxstar,\vv(\xx_t)-\vv(\xxstar)\rangle
\leq -\lambda\|\vv(\xx_t)-\vv(\xxstar)\|^2.
\end{align*}
Hence,
\begin{align*}
2\beta_t\langle \xx_t-\xxstar,\vv(\xx_t)-\vv(\xxstar)+\MM_{t+1}\rangle
&\leq -2\lambda\beta_t\|\vv(\xx_t)-\vv(\xxstar)\|^2
+2\beta_t\langle \xx_t-\xxstar,\MM_{t+1}\rangle.
\end{align*}
Returning to \eqref{split1}, we get
\begin{align*}
\|\xx_{t+1}-\xxstar\|^2
&\leq \|\xx_t-\xxstar\|^2
+\beta_t^2\|\vv(\xx_t)-\vv(\xxstar)\|^2
+\beta_t^2\|\MM_{t+1}\|^2\\
&\;\;+2\beta_t^2\langle \vv(\xx_t)-\vv(\xxstar),\MM_{t+1}\rangle
-2\lambda\beta_t\|\vv(\xx_t)-\vv(\xxstar)\|^2+2\beta_t\langle \xx_t-\xxstar,\MM_{t+1}\rangle\\
&= \|\xx_t-\xxstar\|^2
-\beta_t(2\lambda-\beta_t)\|\vv(\xx_t)-\vv(\xxstar)\|^2
+\beta_t^2\|\MM_{t+1}\|^2\\
&\;\;+2\beta_t\langle \xx_t-\xxstar,\MM_{t+1}\rangle
+2\beta_t^2\langle \vv(\xx_t)-\vv(\xxstar),\MM_{t+1}\rangle\\
&\leq \|\xx_t-\xxstar\|^2
-\lambda\beta_t\|\vv(\xx_t)-\vv(\xxstar)\|^2
+\beta_t^2\|\MM_{t+1}\|^2\\
&\;\;+2\beta_t\langle \xx_t-\xxstar,\MM_{t+1}\rangle
+2\beta_t^2\langle \vv(\xx_t)-\vv(\xxstar),\MM_{t+1}\rangle.
\end{align*}
Here the final inequality follows from the fact that \(\beta_t\leq \lambda\) which implies that \(2\lambda-\beta_t\geq \lambda\) for all $t$. On taking expectation conditioned on \(\FF_t\), we have
\[
\EE[\langle \xx_t-\xxstar,\MM_{t+1}\rangle\mid \FF_t]=0,
\qquad
\EE[\langle \vv(\xx_t)-\vv(\xxstar),\MM_{t+1}\rangle\mid \FF_t]=0,
\]
since \(\{\MM_{t+1}\}\) is a martingale difference sequence with respect to \(\{\FF_t\}\), and both \(\xx_t\) and \(\vv(\xx_t)\) are \(\FF_t\)-measurable.
Therefore,
\begin{align*}
\EE\left[\|\xx_{t+1}-\xxstar\|^2\mid\FF_t\right]
&\leq \|\xx_t-\xxstar\|^2
+\beta_t^2\EE\left[\|\MM_{t+1}\|^2\mid\FF_t\right]
-\lambda\beta_t\|\vv(\xx_t)-\vv(\xxstar)\|^2\\
&\leq \|\xx_t-\xxstar\|^2
+\beta_t^2\sigma^2(1+\|\xx_t\|^2)
-\lambda\beta_t\|\vv(\xx_t)\|^2\\
&\leq \|\xx_t-\xxstar\|^2
+2\beta_t^2\sigma^2\|\xx_t-\xxstar\|^2+\beta_t^2\sigma^2(1+2\|\xxstar\|^2)
-\lambda\beta_t\|\vv(\xx_t)\|^2,
\end{align*}
where we used \(\vv(\xxstar)=0\) for the second inequality. This completes the proof for Lemma \ref{lemma:intermediate}.

\end{proof}

\subsection{\textbf{Proof for Lemma \ref{lemma:bounded_iter}}}
\begin{proof}
Taking expectation and dropping the negative residual term
in Lemma~\ref{lemma:intermediate} gives, for all $t\ge 0$,
$$\EE\left[\|\xx_{t+1}-\xxstar\|^2\right]\leq \left(1+2\beta_t^2\sigma^2\right)\EE\left[\|\xx_t-\xxstar\|^2\right]+\beta_t^2\sigma^2(1+2\|\xxstar\|^2).$$
Expanding the recursion, we get the following.
\begin{align*}
    \EE\left[\|\xx_{t}-\xxstar\|^2\right]&\leq \|\xx_0-\xxstar\|^2\prod_{i=0}^{t-1}\left(1+2\beta_i^2\sigma^2\right)+\sum_{i=0}^{t-1}\sigma^2\beta_i^2(1+2\|\xxstar\|^2)\prod_{j=i+1}^{t-1}\left(1+2\sigma^2\beta_j^2\right)\\
    &\leq e^{2\sigma^2\sum_{i=0}^{\infty}\beta_i^2}\left(\|\xx_0-\xxstar\|^2+\sigma^2\left(1+2\|\xxstar\|^2\right)\sum_{i=0}^{t-1}\beta_i^2\right)
\end{align*}
Let $D_1\coloneqq\sum_{i=0}^\infty \beta_i^2<\infty$. Then, 
\begin{align*}
    \EE\left[\|\xx_{t}-\xxstar\|^2\right]\leq e^{2\sigma^2D_1}\left(\|\xx_0-\xxstar\|^2+\sigma^2\left(1+2\|\xxstar\|^2\right)D_1\right)\eqqcolon \Gamma_1.
\end{align*}
Hence $\EE\left[\|\xx_{t}-\xxstar\|^2\right]\leq \Gamma_1$ for all $t\geq 0$.
\end{proof} 

\subsection{\textbf{Proof for Lemma \ref{lemma:sum_of_errors}}}
\begin{proof}
Taking expectation in Lemma~\ref{lemma:intermediate}, and rearranging terms, we get
\[
\lambda\beta_i\EE\left[\|\vv(\xx_i)\|^2\right]
\le
\EE\left[\|\xx_i-\xxstar\|^2\right]-\EE\left[\|\xx_{i+1}-\xxstar\|^2\right]
+
\beta_i^2\sigma^2\bigl(1+2\|\xxstar\|^2\bigr)
+
2\beta_i^2\sigma^2\EE\left[\|\xx_i-\xxstar\|^2\right].
\]
Summing from \(i=0\) to \(t\) gives
\[
\lambda\sum_{i=0}^{t}\beta_i\EE\left[\|\vv(\xx_i)\|^2\right]
\le
\EE\left[\|\xx_0-\xxstar\|^2\right]
-
\EE\left[\|\xx_{t+1}-\xxstar\|^2\right]
+
\sigma^2\bigl(1+2\|\xxstar\|^2\bigr)\sum_{i=0}^{t}\beta_i^2
+
2\sigma^2\sum_{i=0}^{t}\beta_i^2\EE\left[\|\xx_i-\xxstar\|^2\right].
\]
Dropping the nonnegative term \(\EE[\|\xx_{t+1}-\xxstar\|^2]\), and using the fact that $\sum_t\beta_t^2=D_1$ along with Lemma \ref{lemma:bounded_iter}, we obtain
\[
\lambda\sum_{i=0}^{t}\beta_i\EE\left[\|\vv(\xx_i)\|^2\right]
\le
\EE\left[\|\xx_0-\xxstar\|^2\right]
+
\sigma^2\bigl(1+2\|\xxstar\|^2\bigr)D_1
+
2\sigma^2D_1\Gamma_1.
\]
This completes the proof for Lemma \ref{lemma:sum_of_errors} with $\Gamma_2\coloneqq \frac{1}{\lambda}\left(\|\xx_0-\xxstar\|^2+\sigma^2D_1\left(1+2\|\xxstar\|^2+2\Gamma_1\right)\right)$
\end{proof}

\subsection{\textbf{Proof for Lemma \ref{lemma:noise-avg}}}
\begin{proof}
We first show that $\vv$ is Lipschitz. For all $\xx,\xx'\in\RR^{Nd}$,
\begin{align}
\lambda\|\vv(\xx)-\vv(\xx')\|^2 &\leq-\langle\xx-\xx',\vv(\xx)-\vv(\xx')\rangle \nonumber\\
&\leq\|\xx-\xx'\|\|\vv(\xx)-\vv(\xx')\|\nonumber\\
\implies\|\vv(\xx)-\vv(\xx')\| &\leq\frac{1}{\lambda}\|\xx-\xx'\|.\label{eq:v_lip}
\end{align}

\noindent\textbf{(a) Bound on $\EE\left[\|\UU_t\|^2\right]$.}
We first note that $\EE[\|\MM_{t+1}\|^2]$ is bounded by a constant for all $t\geq 0$.
\begin{align*}
\EE\left[\|\MM_{t+1}\|^2\right]&\leq\sigma^2(1+\EE\left[\|\xx_t\|^2\right])\nonumber\\
&\leq\sigma^2(1+2\EE\left[\|\xx_t-\xxstar\|^2\right]+2\|\xxstar\|^2)\nonumber\\
&\leq\sigma^2(1+2\Gamma_1+2\|\xxstar\|^2)\eqqcolon D_2.
\end{align*}
Now, we note that 
$$\UU_{t}=\sum_{i=0}^{t-1}\beta_i\prod_{j=i+1}^{t-1}(1-\beta_j)\MM_{i+1}.$$
Using the property that the terms of a martingale difference sequence are orthogonal,
\begin{align*}
    \EE\left[\|\UU_t\|^2\right]&\leq \sum_{i=0}^{t-1} \left(\beta_i\prod_{j=i+1}^{t-1}(1-\beta_j)\right)^2\EE\left[\|\MM_{i+1}\|^2\right]\nonumber\leq D_2\sum_{i=0}^{t-1} \beta_i^2\prod_{j=i+1}^{t-1}(1-\beta_j)
\end{align*}

Define $S_t=\sum_{i=0}^{t-1} \beta_i^2\prod_{j=i+1}^{t-1}(1-\beta_j)$. We next show that $S_t\leq 2\beta_t$ for all $t\geq 0$. For this, we first note that,
\begin{align*}
\beta_t-\beta_{t+1}&=\frac{1}{(t+T_0)^\bfrak}-\frac{1}{(t+1+T_0)^\bfrak}\\
&=\frac{1}{(t+T_0)^\bfrak}\left(1-\left(1+\frac{1}{t+T_0}\right)^{-\bfrak}\right)\\
&\leq\frac{1}{(t+T_0)^\bfrak}\frac{\bfrak}{t+T_0}=\frac{\bfrak}{(t+T_0)^{\bfrak+1}}.
\end{align*}
Here the inequality follows from the fact that $(1+x)^{-b}\geq 1-bx$, where $b\in(0,1)$ and $x>0$. Then, 
\begin{align*}
    \frac{\beta_t-\beta_{t+1}}{\beta_t^2}\leq \frac{\bfrak}{(t+T_0)^{1-\bfrak}}\leq \frac{1}{2},
\end{align*}
where the final inequality follows from our assumption that $T_0\geq (2\bfrak)^{1/(1-\bfrak)}$. This implies that $\beta_t-\beta_{t+1}\leq \frac{1}{2}\beta_t^2$ for all $t\geq 0$. 

Now, suppose that $S_t\leq 2\beta_t$ for some $t\geq0$. Then,
\begin{align*}
    S_{t+1}&=(1-\beta_t)S_t+\beta_t^2\leq 2\beta_t(1-\beta_t)+\beta_t^2= 2\beta_t-\beta_t^2\leq 2\beta_{t+1}.
\end{align*}
Note that $S_0\leq 2\beta_0$ and hence, by induction, $S_t\leq 2\beta_t$ for all $t\geq 0$. This completes the proof for Lemma \ref{lemma:noise-avg}(a) with $\Gamma_3=2D_2$.

\noindent\textbf{(b) Bound on $\sum_{i=0}^{t}\beta_i\EE\left[\|\vv(\zz_i)\|^2\right]$.}
Using $(1/\lambda)$-Lipschitz property of the map $\vv(\cdot)$ \eqref{eq:v_lip}, we get 
\begin{align*}
\|\vv(\zz_t)\|&\leq\|\vv(\xx_t)\|+\|\vv(\zz_t)-\vv(\xx_t)\|\\
&\leq\|\vv(\xx_t)\|+\frac{1}{\lambda}\|\zz_t-\xx_t\|\\
&=\|\vv(\xx_t)\|+\frac{1}{\lambda}\|\UU_t\|\\
\implies\|\vv(\zz_t)\|^2 &\leq2\|\vv(\xx_t)\|^2+\frac{2}{\lambda^2}\|\UU_t\|^2.
\end{align*}
Then using Lemma \ref{lemma:noise-avg}(a), 
\begin{align*}
\sum_{i=0}^{t}\beta_i\EE\left[\|\vv(\zz_i)\|^2\right]&\leq2\sum_{i=0}^{t}\beta_i\EE\left[\|\vv(\xx_i)\|^2\right]+\frac{2}{\lambda^2}\sum_{i=0}^{t}\beta_i\EE\left[\|\UU_i\|^2\right]\\
&\leq2\Gamma_2+\frac{2\Gamma_3}{\lambda^2}\sum_{i=0}^{t}\beta_i^2\\
&\leq2\Gamma_2+\frac{2\Gamma_3}{\lambda^2}D_1.
\end{align*}
This completes the proof for part (b) with $\Gamma_4=2\Gamma_2+2\Gamma_3D_1/\lambda^2$.

\noindent\textbf{(c) Bound on $\EE[\|\vv(\zz_i)\|^2]$ in terms of $\EE[\|\vv(\zz_t)\|^2]$.}

From the recursions of $\xx_t$ and $\UU_t$,
\begin{align*}
\zz_{t+1}&=\xx_{t+1}-\UU_{t+1}\\
&=\xx_t+\beta_t(\vv(\xx_t)+\MM_{t+1})-\left((1-\beta_t)\UU_t+\beta_t\MM_{t+1}\right)\\
&=\zz_t+\beta_t(\vv(\xx_t)+\UU_t)\\
&=\zz_t+\beta_t(\vv(\zz_t)+\ee_t),
\end{align*}
where $\ee_t \coloneqq\vv(\xx_t)-\vv(\zz_t)+\UU_t$. Using \eqref{eq:v_lip}, we have $\|\ee_t\|\leq \|\vv(\xx_t)-\vv(\zz_t)\|+\|\UU_t\|\leq (1+1/\lambda)\|\UU_t\|$. This gives us 
\begin{align}
\implies\EE\left[\|\ee_t\|^2\right] &\leq\left(1+\frac{1}{\lambda}\right)^2\EE\left[\|\UU_t\|^2\right]\leq\left(1+\frac{1}{\lambda}\right)^2\Gamma_3\beta_t.\label{eq:e_bound}
\end{align}
Next, note that
\begin{align*}
\zz_{t+1}-\zz_t&=\beta_t(\vv(\zz_t)+\ee_t)\\
\implies\vv(\zz_t)&=\frac{\zz_{t+1}-\zz_t-\beta_t\ee_t}{\beta_t}.
\end{align*}
Therefore,
\begin{align}\label{eqn-v-split1}
\|\vv(\zz_{t+1})\|^2&=\|\vv(\zz_t)\|^2+\|\vv(\zz_{t+1})-\vv(\zz_t)\|^2+2\langle\vv(\zz_{t+1})-\vv(\zz_t),\vv(\zz_t)\rangle\nonumber\\
&=\|\vv(\zz_t)\|^2+\|\vv(\zz_{t+1})-\vv(\zz_t)\|^2+\frac{2}{\beta_t}\langle\vv(\zz_{t+1})-\vv(\zz_t),\zz_{t+1}-\zz_t-\beta_t\ee_t\rangle\nonumber\\
&\leq \|\vv(\zz_t)\|^2+\|\vv(\zz_{t+1})-\vv(\zz_t)\|^2-\frac{2\lambda}{\beta_t}\|\vv(\zz_{t+1})-\vv(\zz_t)\|^2-2\langle\vv(\zz_{t+1})-\vv(\zz_t),\ee_t\rangle\nonumber\\
&=\|\vv(\zz_t)\|^2-\left(\frac{2\lambda}{\beta_t}-1\right)\|\vv(\zz_{t+1})-\vv(\zz_t)\|^2-2\langle\vv(\zz_{t+1})-\vv(\zz_t),\ee_t\rangle.
\end{align}
Here the inequality follows from the fact that map $-\vv(\cdot)$ is $\lambda$-co-coercive.

For any $c>0$, we have,
\begin{align*}
0\leq\left\|\sqrt{c}u+\frac{1}{\sqrt{c}}w\right\|^2=c\|u\|^2+2\langle u,w\rangle+\frac{1}{c}\|w\|^2.
\end{align*}
Hence, 
 \begin{align*}
     -c\|u\|^2-2\langle u,w\rangle &\leq\frac{1}{c}\|w\|^2
 \end{align*}
 Taking $c=\frac{2\lambda}{\beta_t}-1$, $u=\vv(\zz_{t+1})-\vv(\zz_t)$, and $w=\ee_t$,
\begin{align*}
    -\left(\frac{2\lambda}{\beta_t}-1\right)\|\vv(\zz_{t+1})-\vv(\zz_t)\|^2-2\langle\vv(\zz_{t+1})-\vv(\zz_t),\ee_t\rangle\leq \frac{1}{\left(\frac{2\lambda}{\beta_t}-1\right)}\|\ee_t\|^2\leq \frac{\beta_t}{\lambda}\|\ee_t\|^2.
\end{align*}
Here the final inequality follows from the assumption that $\beta_t\leq \lambda$ for all $t\geq 0$. Combining this with \eqref{eqn-v-split1}, we get
\begin{align*}
\|\vv(\zz_{t+1})\|^2\leq\|\vv(\zz_t)\|^2+\frac{\beta_t}{\lambda}\|\ee_t\|^2.
\end{align*}
This implies that 
\begin{align*}
\EE\left[\|\vv(\zz_{t+1})\|^2\right]\leq\EE\left[\|\vv(\zz_t)\|^2\right]+\Gamma_5\beta_t^2,
\end{align*}
where $\Gamma_5=(1+1/\lambda)^2\Gamma_3/\lambda$. This follows from the bound on $\EE[\|\ee_t\|^2]$ in \eqref{eq:e_bound}.
Here, the second inequality uses $\beta_t\leq\lambda$, and the last inequality uses \eqref{eq:e_bound}.

Iterating from $j=i$ to $t-1$ gives
\begin{align*}
\EE\left[\|\vv(\zz_i)\|^2\right]\geq\EE\left[\|\vv(\zz_t)\|^2\right]
-\Gamma_5\sum_{j=i}^{t-1}\beta_j^2.
\end{align*}
This completes the proof for Lemma \ref{lemma:noise-avg}.
\end{proof}

\subsection{\textbf{Proof for Lemma \ref{lemma:last-iter-bound-z}}}
\begin{proof}
For each $i\leq t-1$, Lemma~\ref{lemma:noise-avg}(c) gives
\begin{align*}
\EE\left[\|\vv(\zz_t)\|^2\right]&\leq\EE\left[\|\vv(\zz_i)\|^2\right]+\Gamma_5\sum_{j=i}^{t-1}\beta_j^2.
\end{align*}
Multiplying by $\beta_i$ and summing from $i=0$ to $t-1$,
\begin{align*}
\EE\left[\|\vv(\zz_t)\|^2\right]\sum_{i=0}^{t-1}\beta_i&\leq\sum_{i=0}^{t-1}\beta_i\EE\left[\|\vv(\zz_i)\|^2\right]+\Gamma_5\sum_{i=0}^{t-1}\beta_i\sum_{j=i}^{t-1}\beta_j^2.
\end{align*}
Adding $\beta_t\EE\left[\|\vv(\zz_t)\|^2\right]$ to both sides and using Lemma~\ref{lemma:noise-avg}(b),
\begin{align*}
\EE\left[\|\vv(\zz_t)\|^2\right]\sum_{i=0}^{t}\beta_i&\leq\Gamma_4+\Gamma_5\sum_{i=0}^{t-1}\beta_i\sum_{j=i}^{t-1}\beta_j^2\\
\implies\EE\left[\|\vv(\zz_t)\|^2\right]&\leq\frac{\Gamma_4}{\sum_{i=0}^{t}\beta_i}+\Gamma_5\frac{\sum_{i=0}^{t-1}\beta_i\sum_{j=i}^{t-1}\beta_j^2}{\sum_{i=0}^{t}\beta_i}.
\end{align*}
To bound the first term, we use the same simplification as the proof for Theorem \ref{thm:time-avg} to get
$$\frac{\Gamma_4}{\sum_{i=0}^{t}\beta_i}\leq \frac{\Gamma_4T_0^\bfrak}{(t+1)^{1-\bfrak}}.$$
Now, to bound the second term, note that
\begin{align*}
\sum_{i=0}^{j}\beta_i\leq\sum_{i=0}^{j}(i+1)^{-\bfrak}\leq1+\int_1^{j+1}x^{-\bfrak}dx\leq\frac{(j+1)^{1-\bfrak}}{1-\bfrak}.
\end{align*}
Now, by swapping the order of summation and using the above bound, we get
\begin{align*}
\sum_{i=0}^{t-1}\beta_i\sum_{j=i}^{t-1}\beta_j^2 =\sum_{j=0}^{t-1}\left(\sum_{i=0}^{j}\beta_i\right)\beta_j^2\leq\frac{1}{1-\bfrak}\sum_{j=0}^{t-1}(j+1)^{1-3\bfrak}.
\end{align*}
Using standard integral comparison, we bound the above sum as
\begin{align*}
\sum_{j=0}^{t-1}(j+1)^{1-3\bfrak}
\leq
\begin{cases}
\dfrac{(t+1)^{2-3\bfrak}}{2-3\bfrak},&\bfrak\in(\tfrac12,\tfrac23),\\
2\log(t+1),&\bfrak=\tfrac23,\\
\displaystyle\sum_{j=0}^{\infty}(j+1)^{1-3\bfrak},&\bfrak\in(\tfrac23,1).
\end{cases}
\end{align*}
Substituting this and using the lower bound on $\sum_{i=0}^{t}\beta_i$, we get
\begin{align*}
\Gamma_5\frac{\sum_{i=0}^{t-1}\beta_i\sum_{j=i}^{t-1}\beta_j^2}{\sum_{i=0}^{t}\beta_i}
\leq
\begin{cases}
\dfrac{\Gamma_5T_0^\bfrak}{(1-\bfrak)(2-3\bfrak)}(t+1)^{-(2\bfrak-1)},&\bfrak\in(\tfrac12,\tfrac23),\\
6\Gamma_5T_0^{2/3}\dfrac{\log(t+1)}{(t+1)^{1/3}},&\bfrak=\tfrac23,\\
\dfrac{\Gamma_5T_0^\bfrak}{1-\bfrak}\left(\sum_{j=0}^{\infty}(j+1)^{1-3\bfrak}\right)(t+1)^{-(1-\bfrak)},&\bfrak\in(\tfrac23,1).
\end{cases}
\end{align*}
Using the upper bound on the first term and the above upper bound on the second term, we obtain
\begin{align*}
\EE\left[\|\vv(\zz_t)\|^2\right]
\leq
\begin{cases}
\Gamma_6(t+1)^{-(2\bfrak-1)},&\bfrak\in(\tfrac12,\tfrac23),\\
\Gamma_7\dfrac{\log(t+1)}{(t+1)^{1/3}},&\bfrak=\tfrac23,\\
\Gamma_8(t+1)^{-(1-\bfrak)},&\bfrak\in(\tfrac23,1).
\end{cases}
\end{align*}
Here
\begin{align*}
\Gamma_6&\coloneqq\Gamma_4T_0^\bfrak+\frac{\Gamma_5T_0^\bfrak}{(1-\bfrak)(2-3\bfrak)},\\
\Gamma_7&\coloneqq\frac{\Gamma_4T_0^{2/3}}{\log2}+6\Gamma_5T_0^{2/3},\\
\Gamma_8&\coloneqq\Gamma_4T_0^\bfrak+\frac{\Gamma_5T_0^\bfrak}{1-\bfrak}\sum_{j=0}^{\infty}(j+1)^{1-3\bfrak}.
\end{align*}
This completes the proof for Lemma 5.
\end{proof}
\end{document}